\theoremstyle{definition}
\newtheorem{theorem}{Theorem}
\newtheorem{definition}{Definition}
\newtheorem{lemma}{Lemma}
\algrenewcommand\algorithmicrequire{\textbf{Input:}}
\algrenewcommand\algorithmicensure{\textbf{Output:}}
\algrenewcommand\algorithmicforall{\textbf{for each}}
\renewcommand{\bm}[1]{{\boldsymbol{\mathrm{#1}}}}
\DeclareMathOperator*{\argmax}{arg\,max}
\begin{document}

\preprint{APS/123-QED}

\title{Probe position determination with multichannel \textit{I--V} measurements in a two-dimensional sheet: Computational method and mathematical analysis}%

\author{Tomoyuki Iori}
\email{tomoyuki.iori@ieee.org}  
\affiliation{Graduate School of Information Science and Technology, Osaka University, Suita, Osaka 565-0871, Japan}
\affiliation{Space Tracking and Commmunications Center, \\Japan Aerospace Exploration Agency, Tsukuba, Ibaraki 305-8505, Japan}

\author{Ryu Yukawa}
\email{r.yukawa@tohoku.ac.jp}  
\affiliation{International Center for Synchrotron Radiation Innovation Smart (SRIS), Tohoku University, Sendai, Miyagi 980-8572, Japan}

\date{\today}


\begin{abstract} 

Atomically thin films and surfaces exhibit many distinctive two-dimensional electronic properties that are absent in bulk crystals. \textit{In situ} microscale multi-probe measurements have been utilized as an effective method to identify the electrical conductivity of such thin films and surfaces. Precise determination of multi-probe positions is crucial for accurate characterization of the conductance. However, traditional methods that use microscopes for determining multi-probe positions often impose significant constraints on experimental setups. In some cases, installing a microscope is not even feasible. Therefore, in this study, we propose a novel method to determine probe positions using electrical signals from the probes. This method enables precise determination of probe positions using a reference sheet and reference probes, even at low or high temperatures and under ultra-high vacuum or high-pressure conditions. The proposed method simplifies the integration of microscale multi-probe measurement systems into various devices, thereby advancing research on thin films and surfaces.

\end{abstract}

\maketitle

\section{INTRODUCTION}

Atomically thin films and surfaces have revealed various electronic transport properties that do not appear in bulk crystals; for example, ultra-thin films have demonstrated high carrier mobilities~\cite{Novoselov2004,Zhang2005,Schwierz2015}, superconductive transitions~\cite{Wang2012,Ichinokura2016,Tsen2016}, and metal--insulator transitions~\cite{Tanikawa2004,Radisavljevic2013,Yukawa2021}, whereas surfaces have revealed spin currents~\cite{Hsieh2008}, 2D electron gas on transparent substrates~\cite{Meevasana2011,Santander-Syro2011,Moser2013}, and topological superconductivity~\cite{Zhang2018}.
Despite these interesting properties, accurate measurement of the electrical conductivity of atomically thin films and crystal surfaces remains challenging because they are easily affected by the atmosphere or residual gases. Therefore, it is desired that samples are directly transferred from a sample growth/treatment chamber to a measurement chamber without exposure to the air.

For the accurate measurements of low-resistivities of thin films/surfaces, the current-voltage (\textit{I--V}) measurement using four probes is indispensable. 
In the four-probe method, the electronic current is supplied through two probes while the voltage change is measured across the remaining two. Because no electronic current flows between the voltage-measuring probe and the sheet, no voltage drops originating from the contact resistance are induced. Thus, accurate measurements that are not affected by contact resistance are realized using the four-probe method~\cite{Wenner1916,Valdes1954}. 
In addition, surface-sensitive measurements with probe spacing reduced to the order of micrometers~\cite{Shiraki2001,Miccoli2015} are necessary for the precise measurement of thin films/surfaces on substrates/bulk with a non-negligible electronic conductivity.
Hence, the adoption of four probes arranged at microscale intervals is essential for measuring the electronic conductivity in thin films and crystal surfaces~\cite{Shiraki2001,Miccoli2015}. 

Employing a larger number of probes ($N\geq 4$) and arranging them in a non-linear configuration allow for the acquisition of important properties related to the electronic characteristics. By placing multi-probes at varying probe intervals, it becomes possible to perform precise measurements that distinguish between thin-film/surface conductivity and substrate/bulk conductivity~\cite{Shiraki2001,Durand2016,Ko2018}. In addition, setting the probes in a non-linear configuration enables the acquisition of the anisotropic transport properties~\cite{Kanagawa2003,Uetake2012,Edler2015,Ma2017}. The use of a non-linear configuration under a magnetic field enables even Hall measurements, which provide important information about carriers~\cite{Zhang2005,Wang2014}. Therefore, \textit{in situ} measurements using micro-multi probes with non-linear configurations have significant advantages over the conventional four-probe methods for investigating the electronic properties of thin films or surfaces and are one of the technologies that attract growing attention.

To obtain accurate transport properties from such multi-probe measurements, probe positions must be known. However, determination or verification of the probe positions is often challenging. In a typical system, the measurement stage installed inside an ultra-high vacuum (UHV) chamber is located away from the viewports, thereby limiting the magnification of the view near the probes when using an optical microscope. Although a high spatial resolution is expected using a scanning electron microscope (SEM)~\cite{Shiraki2001,Uetake2012,Edler2015,Durand2016,Kim2007}, installing the SEM system inside the UHV equipment is costly and limits the arrangement of the apparatus. In addition, direct determination using the microscopes becomes considerably more difficult when the probes are hidden by a probe holder (Fig.~\ref{fig:Intro}) or when the measurement stage is covered by a radiation shield for low-temperature measurements. To avoid these difficulties, a method to determine the probe positions based on a completely different principle than the conventional methods using microscopes is desired.

\begin{figure} 
    \centering
    \includegraphics[width=9cm]{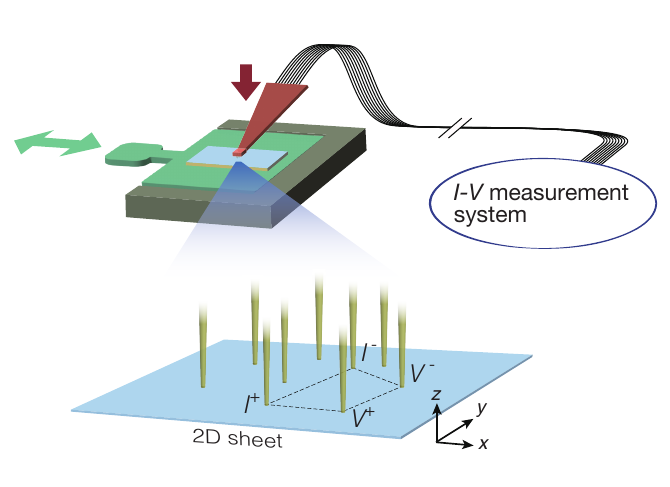}
    \caption{Schematic of an experimental setup for the multi-probes measurement on a two-dimensional sheet. After transferring a sample folder onto a stage, a tip with multi-probes is approached to the sample surface. The sample transfer and the contact of the probes to the surface are performed in a vacuum chamber. \textit{I--V} measurements are conducted via electronic cables, each connecting a probe to the \textit{I--V} measurement system placed outside of the vacuum chamber. \textit{I--V} measurements are conducted by switching various sets of four probes.}
    \label{fig:Intro}
\end{figure}
    
In this study, we propose a method to precisely determine the positions of multi-probes by analyzing the electrical signals from the probes. This approach uses a reference sheet and reference probes instead of a microscope. First, a reference sheet with a uniform resistivity is transported to the measurement stage, and both target probes and reference probes are placed on the sheet. Subsequently, \textit{I--V} measurements are repeatedly conducted with various sets of four probes, where each set of probes is individually selected for each measurement. From the obtained data, the positions of the target probes relative to the reference probes are determined. 
We have shown that, under specific conditions, both probe positions and sheet resistances can be simultaneously determined, without needing to know the resistance of the reference sheet, which is susceptible to external factors.

\paragraph*{Notation:}
Let $\mathbb{R}$ denote the set of all real numbers. 
For two positive integers $n, m$, $\mathbb{R}^n$ and $\mathbb{R}^{n \times m}$ denote the set of all $n$-dimensional vectors and the set of all $n \times m$ matrices with real components, respectively.
All the vectors are denoted in bold letters, for example, $\bm{p} \in \mathbb{R}^2$.
The symbol $\| (\cdot) \|$ denotes the Euclidean norm of a vector, for example, $\| \bm{p} \| = \sqrt{p_1^2 + p_2^2}$ for $\bm{p} = [p_1, p_2]^\top \in \mathbb{R}^2$.
The symbol~$\hat{(\cdot)}$ denotes the estimate of an unknown parameter, for example, $\hat{\bm{p}} \in \mathbb{R}^2$ is the estimate of $\bm{p}$.
For a positive integer $a$, $a!$ denotes the factorial of $a$, that is, $a! \coloneqq 1 \times 2 \times \cdots \times (a-1) \times a$.
For two positive integers $a, b$ with $a > b$, $\tbinom{a}{b}$ denotes the number of combinations of $b$ elements chosen from $a$ elements, that is, $\tbinom{a}{b} \coloneqq a! / (b! (a-b)!)$.

\section{MATHEMATICAL ANALYSIS \label{sec:Math}}

In this section, we provide computational methods and mathematical formulations, theories, and proofs, for determining probe positions on a two-dimensional reference sheet with \textit{I--V} measurements.

\subsection{Probe-Position Determination as a Nonlinear Optimization Problem}
We consider a two-dimensional reference sheet with uniform isotropic sheet resistance $\rho_{\rm{s}}$.
Let $N$ denote the number of probes placed on the sheet.
Some of the probes are assumed to be reference probes, whose positions are known.
The resistance obtained from \textit{I--V} measurements depends only on distances between each pair of probes used in the measurements.
In other words, the resistance $R^{ab}_{cd}$ with the $I^+$-probe $a$, $V^+$-probe $b$, $V^-$-probe $c$, and $I^-$-probe $d$ is given by~\cite{Miccoli2015}
\begin{equation}
  R^{ab}_{cd} =  \frac{\rho_{\mathrm{s}}}{2 \pi}\ln\left(	\frac{s_{bd}s_{ac}}{s_{ab}s_{cd}} \right), \label{eq:R}
\end{equation}
where $s_{ab} = \| \bm{p}_a - \bm{p}_b\|$ denotes the distance between probes $a$ and $b$ with their positions $\bm{p}_a \in \mathbb{R}^2$ and $\bm{p}_b \in \mathbb{R}^2$, respectively.
The resistance $R^{ab}_{cd}$ is experimentally determined by the relation $R^{ab}_{cd}=V/I$, where $V$ is the voltage drop from probe $b$ to probe $c$, and $I$ is the current flow from probe $a$ to probe $d$. 
The sheet resistance $\rho_{\rm{s}}$ can be uniquely determined from~\eqref{eq:R} if the positions of probes $a$, $b$, $c$, and $d$ are known and the resistance $R^{ab}_{cd}$ is measured.

Definition~\eqref{eq:R} can be rewritten as 
\begin{equation}
    C^{ab}_{cd}(\rho_{\mathrm{s}})s_{ab}s_{cd} = s_{bd}s_{ac}, \label{eq:R2}
\end{equation}
where $C^{ab}_{cd}(\rho_{\mathrm{s}}) \coloneqq \exp(2 \pi R^{ab}_{cd} / \rho_{\mathrm{s}})$. 
By substituting $s_{ab} = \|\bm{p}_a - \bm{p}_b\|$ into~\eqref{eq:R2} and subtracting the right-hand side from the left-hand side, we obtain
\begin{equation}
    C^{ab}_{cd}(\rho_{\mathrm{s}})\|\bm{p}_a - \bm{p}_b\|\|\bm{p}_c - \bm{p}_d\| - \|\bm{p}_b - \bm{p}_d\|\|\bm{p}_a - \bm{p}_c\|,
\end{equation}
which must be zero if $\rho_\mathrm{s}$, $\bm{p}_a$, $\bm{p}_b$, $\bm{p}_c$, and $\bm{p}_d$ are the true values. 
Hence, the following function of the estimated probe positions and sheet resistance serves as the error of the estimates: 
\begin{multline}
    \mathrm{Err}(\hat{\bm{p}}_a, \hat{\bm{p}}_b, \hat{\bm{p}}_c, \hat{\bm{p}}_d, \hat{\rho}_{\mathrm{s}}) \\
    \coloneqq C^{ab}_{cd}(\hat{\rho}_{\mathrm{s}})\|\hat{\bm{p}}_a - \hat{\bm{p}}_b\|\|\hat{\bm{p}}_c - \hat{\bm{p}}_d\| - \|\hat{\bm{p}}_b - \hat{\bm{p}}_d\|\|\hat{\bm{p}}_a - \hat{\bm{p}}_c\|, \label{eq:E}
\end{multline}
which becomes zero when the estimates $\hat{\bm{p}}_a$, $\hat{\bm{p}}_b$, $\hat{\bm{p}}_c$, $\hat{\bm{p}}_d$, and $\hat{\rho}_\mathrm{s}$ coincide with the true values.

For the set of all probe quadruples $\mathcal{M}$ with measured resistances, we can compute the estimates $\hat{\bm{P}} = (\hat{\bm{p}}_1, \hat{\bm{p}}_2, \dots, \hat{\bm{p}}_N)$ and $\hat{\rho}_\mathrm{s}$ by minimizing the following cost function:
\begin{equation}
    \sum_{(a,b,c,d) \in \mathcal{M}} w^{ab}_{cd}\mathrm{Err}(\hat{\bm{p}}_a, \hat{\bm{p}}_b, \hat{\bm{p}}_c, \hat{\bm{p}}_d, \hat{\rho}_\mathrm{s})^2, \label{eq:opt}
\end{equation}  
where the summation is taken over all quadruples in $\mathcal{M}$, and $w^{ab}_{cd} \in \mathbb{R}$ denotes the weight for each quadruple.
The cost function~\eqref{eq:opt} is nonlinear and nonconvex, and it can have multiple local minima. 
Therefore, we need to clarify how many and which resistance measurements are required to determine the probe positions. 

\subsection{Theoretical Results on Relation between Probe Positions and Resistance Measurements}
For the four probes $a$, $b$, $c$, and $d$, there are 24 patterns of measurement resistance: $R^{ab}_{cd}, R^{bc}_{da}, R^{cd}_{ab}, \dots$, which are obtained by permuting the roles of the probes.
However, these patterns do not provide independent information on the positions of probes $a$, $b$, $c$, and $d$. 
For example, when the resistance $R^{ab}_{cd}$ is measured, the measurement of another resistance $R^{ac}_{bd}$ does not provide any new information on the probe positions because the swap of $V^+$ and $V^-$ probes only reverses the sign of the resistance, that is, $R^{ab}_{cd} = -R^{ac}_{bd}$.
Specifically, the following theorem holds.
\begin{theorem} \label{thm:indepMeas}
    All 24 resistance measurement patterns can be computed from only two of them.
\end{theorem}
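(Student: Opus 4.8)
The plan is to exploit the multiplicative structure of Eq.~\eqref{eq:R} by passing to logarithms. Introducing the log-distances $\ell_{ij} \coloneqq \ln s_{ij}$, each measurement becomes the linear combination $2\pi R^{wx}_{yz}/\rho_{\mathrm{s}} = \ell_{xz} + \ell_{wy} - \ell_{wx} - \ell_{yz}$ of the six quantities $\ell_{12}, \ell_{13}, \ell_{14}, \ell_{23}, \ell_{24}, \ell_{34}$. First I would recast all 24 patterns in this linear form, so that the problem reduces to understanding the image of a linear map from $\mathbb{R}^6$.

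The key observation is that the four pairs entering any single measurement organize into the three ways of splitting the four probes into two disjoint pairs: $\{12,34\}$, $\{13,24\}$, and $\{14,23\}$. Setting $A \coloneqq \ell_{12}+\ell_{34}$, $B \coloneqq \ell_{13}+\ell_{24}$, and $C \coloneqq \ell_{14}+\ell_{23}$, I would show that every pattern equals $\tfrac{\rho_{\mathrm{s}}}{2\pi}$ times a difference of two of these three partition sums, namely the numerator partition $\{\{w,y\},\{x,z\}\}$ minus the denominator partition $\{\{w,x\},\{y,z\}\}$. Since these two partitions are always distinct and there are only $3 \times 2 = 6$ ordered pairs of distinct partitions, the 24 patterns collapse to at most six distinct values, each realized by exactly four labelings.

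I would then identify the dependence among these six values. They are precisely $\pm(A-B)$, $\pm(B-C)$, $\pm(A-C)$, which span only a two-dimensional space because of the relation $(A-B)+(B-C)=(A-C)$. The $\pm$ pairing reproduces the sign-reversal symmetries already noted in the text (interchanging $V^+\leftrightarrow V^-$ or $I^+\leftrightarrow I^-$ flips the sign), while the additive identity is the genuinely new constraint. Consequently, from any two measurements whose values correspond to two distinct differences up to sign---say $A-B$ and $B-C$---one recovers the third difference by addition and all remaining values by sign reversal, hence every one of the 24 resistances.

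The main obstacle I anticipate is the bookkeeping in the second step: verifying carefully that the numerator and denominator pairs of each of the 24 labelings really do form two distinct two-pair partitions, and that each of the six ordered partition-pairs is attained by exactly four labelings. Once this correspondence is pinned down, the spanning argument is immediate and the ``two suffice'' conclusion follows directly from $(A-B)+(B-C)=(A-C)$; I would only note in passing that the two chosen measurements must be independent, as the text's example $R^{ab}_{cd}=-R^{ac}_{bd}$ shows that a sign-related pair carries no extra information.
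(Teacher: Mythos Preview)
Your argument is correct and arrives at the same key identity as the paper, but the packaging differs in an instructive way. The paper works group-theoretically: it acts on $R^{ab}_{cd}$ by transpositions of the four roles, shows that $\tau_{12}=\tau_{34}$, $\tau_{13}=\tau_{24}$, $\tau_{14}=\tau_{23}=-e$, and then derives the single nontrivial relation $\tau_{12}+\tau_{13}=e$ directly from~\eqref{eq:R}. Your route instead observes up front that the log-distances enter only through the three perfect-matching sums $A=\ell_{12}+\ell_{34}$, $B=\ell_{13}+\ell_{24}$, $C=\ell_{14}+\ell_{23}$, so that every measurement is $\tfrac{\rho_{\mathrm s}}{2\pi}$ times an ordered difference of two of $A,B,C$; the six ordered differences collapse to a two-dimensional span via $(A-B)+(B-C)=(A-C)$. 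This is literally the same identity---for instance, $\tau_{12}(R^{ab}_{cd})+\tau_{13}(R^{ab}_{cd})=R^{ab}_{cd}$ unpacks to $(C-A)+(B-C)=B-A$---but your matching viewpoint makes the ``why two'' transparent (three partitions, only differences matter) and explains the $24\to6$ collapse and the four-fold multiplicity without any case checking. The paper's version, in turn, needs no bookkeeping of which labeling lands on which partition pair, since it never enumerates the six values explicitly. Your closing caveat that the two chosen measurements must not be a sign-related pair is appropriate and matches the paper's implicit choice of $R^{ab}_{cd}$ together with $R^{ba}_{cd}$ (or $R^{cb}_{ad}$).
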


Moreover, \eqref{eq:R} indicates that the resistance is not changed by scaling distances between probes; that is, the resistance $R^{ab}_{cd}$ is equal to the resistance $R^{a'b'}_{c'd'}$ measured with probe positions $\bm{p}_{a'}$, $\bm{p}_{b'}$, $\bm{p}_{c'}$, and $\bm{p}_{d'}$ defined by $\bm{p}_{a'} = \lambda \bm{p}_{a}$ for any $\lambda > 0$. 
This implies that another equation satisfied by the probe positions is required to determine the probe positions.
The key observation is that~\eqref{eq:R} depends only on distances between probes, and hence, it fails to consider the fact that all probes are placed on a single two-dimensional plane, that is, the surface of the reference sheet.
This fact can be formulated using the well-known \emph{cosine rules}: 
\begin{equation}
    s_{ab}^2 + s_{ac}^2 - s_{bc}^2 = 2 \bm{v}_{ab}^\top \bm{v}_{ac}, \label{eq:cos_rule}
\end{equation}
where $\bm{v}_{ab} = \bm{p}_b - \bm{p}_a$ and $\bm{v}_{ac} = \bm{p}_c - \bm{p}_a$. 
Using~\eqref{eq:cos_rule}, the following theoretical results were obtained:

\begin{lemma} \label{thm:3probes}
  Let $a$, $b$, and $c$ denote reference probes, and let $d$ denote another probe whose position is unknown. 
  Suppose that probes $a$, $b$, and $c$ are not aligned in a straight line, resistances $R^{da}_{bc}$ and $R^{da}_{cb}$ are measured, and sheet resistance $\rho_{\rm{s}}$ is known. 
  Then, the number of possible positions of probe $d$ is at most two.
\end{lemma}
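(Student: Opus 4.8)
The plan is to show that the two measured resistances pin $\bm{p}_d$ down to the intersection of two classical loci, each of which is either a circle or a straight line, so that at most two intersection points can arise.

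First I would rewrite each measurement in the product form~\eqref{eq:R2}. Writing out $R^{da}_{bc}$ and $R^{da}_{cb}$ with the convention $I^+=d$, $V^+=a$, and the two remaining roles assigned to $b$ and $c$, one obtains $C^{da}_{bc}(\rho_{\mathrm{s}})\, s_{da} s_{bc} = s_{db} s_{ac}$ and $C^{da}_{cb}(\rho_{\mathrm{s}})\, s_{da} s_{bc} = s_{dc} s_{ab}$. Because $a,b,c$ are reference probes, the distances $s_{ab}$, $s_{ac}$, $s_{bc}$ are known, and $\rho_{\mathrm{s}}$ is assumed known, so $C^{da}_{bc}(\rho_{\mathrm{s}})$ and $C^{da}_{cb}(\rho_{\mathrm{s}})$ are known positive numbers. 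Solving each equation for the distance to $b$ or $c$ eliminates the unknown $s_{da}$ and leaves a pure ratio constraint $s_{db} = \alpha\, s_{da}$ and $s_{dc} = \beta\, s_{da}$, with the known positive constants $\alpha \coloneqq C^{da}_{bc}(\rho_{\mathrm{s}})\, s_{bc}/s_{ac}$ and $\beta \coloneqq C^{da}_{cb}(\rho_{\mathrm{s}})\, s_{bc}/s_{ab}$. The key point is that each of these constrains $\bm{p}_d$ on its own.

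Each ratio constraint is an Apollonius condition: $\|\bm{p}_d - \bm{p}_b\| = \alpha \|\bm{p}_d - \bm{p}_a\|$ is the locus of points whose distances to the two fixed points $\bm{p}_a$ and $\bm{p}_b$ have fixed ratio $\alpha$. Squaring both sides and expanding the squared norms, which is precisely the content of the cosine rule~\eqref{eq:cos_rule}, leaves $(1-\alpha^2)\|\bm{p}_d\|^2$ as the only second-order term, so the locus is a circle when $\alpha \neq 1$ and the perpendicular bisector of the segment $\bm{p}_a\bm{p}_b$ (a straight line) when $\alpha = 1$; the same holds for the $(\bm{p}_a,\bm{p}_c,\beta)$ constraint. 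Hence $\bm{p}_d$ must lie on the intersection of two such loci, and two circles, a circle and a line, or two lines meet in at most two points.

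The only gap, and the step I expect to be the main obstacle, is ruling out the degenerate case in which the two loci \emph{coincide}, which would admit an entire curve of positions rather than finitely many; this is exactly where the non-collinearity of $a,b,c$ is used. The centre of the Apollonius circle for $(\bm{p}_a,\bm{p}_b)$ lies on the line through $\bm{p}_a$ and $\bm{p}_b$ and is never equal to $\bm{p}_a$, while that for $(\bm{p}_a,\bm{p}_c)$ lies on the line through $\bm{p}_a$ and $\bm{p}_c$; since these two lines meet only at $\bm{p}_a$, the centres are distinct and the circles cannot coincide. In the linear subcases the two perpendicular bisectors have normal directions $\bm{p}_b - \bm{p}_a$ and $\bm{p}_c - \bm{p}_a$, which are linearly independent precisely because $a,b,c$ are not collinear, so the bisectors are non-parallel and meet in a single point. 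In every case the number of simultaneous solutions, and therefore of possible positions of probe $d$, is at most two.
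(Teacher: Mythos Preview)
Your argument is correct and complete. The reduction of each measurement to an Apollonius ratio constraint $s_{db}=\alpha\,s_{da}$ and $s_{dc}=\beta\,s_{da}$ is exactly what equation~\eqref{eq:R2} gives for $R^{da}_{bc}$ and $R^{da}_{cb}$, and your case analysis ruling out coincidence of the two loci via the non-collinearity hypothesis is clean.

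The paper takes a different, more algebraic route. It squares the same two ratio relations to write $l_{bd}$ and $l_{cd}$ as linear functions of $l_{ad}\coloneqq s_{ad}^2$, packages them into the vector $\bm{q}_d=\bm{\alpha}_d\,l_{ad}+\bm{\beta}$ (your $\alpha^2$ and $\beta^2$ reappear as $1-\alpha^2$ and $1-\beta^2$ inside $\bm{\alpha}_d$), and then uses the Gram matrix identity $l_{ad}=\tfrac14\bm{q}_d^\top G_2^{-1}\bm{q}_d$ to obtain a single quadratic equation in the scalar $l_{ad}$. Each root determines $(l_{ad},l_{bd},l_{cd})$, hence a unique position for $d$ relative to the non-collinear triple $a,b,c$. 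Your approach is more elementary and geometric, and makes the role of the non-collinearity assumption transparent (distinct Apollonius centres). The paper's approach, while heavier for this lemma alone, sets up the $\bm{q}_d$, $G_2$ machinery that it reuses verbatim in the proofs of Lemmas~\ref{thm:4probes} and~\ref{thm:5probes}; the quadratic in $l_{ad}$ becomes the building block $f_{d,\rho_{\mathrm{s}}}$ whose resultants drive the later arguments. One minor remark: your appeal to~\eqref{eq:cos_rule} for the squaring step is cosmetic---expanding $\|\bm{p}_d-\bm{p}_b\|^2$ needs only bilinearity, not the cosine rule---but this does not affect the validity of the proof.
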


\begin{lemma} \label{thm:4probes}
    Let $a$, $b$, $c$, and $d$ denote reference probes, and let $e$ denote another probe whose position is unknown.
    Suppose that the three probes $a$, $b$, and $c$ are not aligned in a straight line, that the same is true for $b$, $c$, and $d$, and that resistances $R^{ea}_{bc}$, $R^{ea}_{cd}$, and $R^{ea}_{db}$ are measured.
    Then, the sheet resistance $\rho_{\rm{s}}$ and the position of probe $e$ are uniquely determined, except in cases where probes $a$, $b$, $c$, and $d$ satisfy a certain condition, which is given by~\eqref{eq:4probes_condition} in Appendix~\ref{appsec:4probes}. 
\end{lemma}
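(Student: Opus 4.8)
The plan is to reduce the three measurements to an equivalent algebraic system in the unknown position $\bm{p}_e$ and the unknown sheet resistance $\rho_{\mathrm{s}}$, and then to solve it in two stages: first locate $\bm{p}_e$ for a hypothetical value of $\rho_{\mathrm{s}}$, and then pin down $\rho_{\mathrm{s}}$ by a consistency condition. First I would rewrite each measured resistance through the form~\eqref{eq:R2}. Writing $r_a=\|\bm{p}_e-\bm{p}_a\|$, $r_b$, $r_c$, $r_d$ for the unknown distances from $e$ to the reference probes, the measurements $R^{ea}_{bc}$, $R^{ea}_{cd}$, and $R^{ea}_{db}$ become $r_b=\alpha\, r_a$, $r_c=\beta\, r_a$, and $r_d=\gamma\, r_a$, where $\alpha$, $\beta$, $\gamma$ are products of known inter-reference distances with the factors $\exp(2\pi R^{ea}_{bc}/\rho_{\mathrm{s}})$, etc. Thus each measurement constrains $e$ to an Apollonius circle relative to $a$, with a ratio known only up to the unknown $\rho_{\mathrm{s}}$.

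Second, I would apply the cosine rule~\eqref{eq:cos_rule} in linearized form: for each reference probe the difference $r_x^2-r_a^2=\|\bm{p}_e-\bm{p}_x\|^2-\|\bm{p}_e-\bm{p}_a\|^2$ is affine in $\bm{p}_e$ with coefficients fixed by the known reference positions. Substituting $r_x=\alpha r_a$ (respectively $\beta r_a$, $\gamma r_a$) turns each measurement into a relation of the shape $(\alpha^2-1)\,r_a^2=(\text{affine in }\bm{p}_e)$, and similarly for $\beta$, $\gamma$. Eliminating the common quadratic term $r_a^2=\|\bm{p}_e-\bm{p}_a\|^2$ between pairs of these relations yields two equations that are linear in $\bm{p}_e$ (the radical axes of the corresponding Apollonius circles). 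Because $a,b,c$ are not collinear and $b,c,d$ are not collinear, the $2\times 2$ coefficient matrix of this linear system is nonsingular for a fixed $\rho_{\mathrm{s}}$, so it determines $\bm{p}_e=\bm{p}_e(\rho_{\mathrm{s}})$ uniquely. This is exactly where the two-fold ambiguity of Lemma~\ref{thm:3probes} is removed: the third measurement together with the fourth reference probe supplies the extra radical axis that selects a single intersection point.

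Third, having expressed $\bm{p}_e$ as a function of $\rho_{\mathrm{s}}$, I would impose the remaining consistency condition that the located point actually has the distance to $a$ predicted by the linear system, i.e. that $\|\bm{p}_e(\rho_{\mathrm{s}})-\bm{p}_a\|^2$ equals the value of $r_a^2$ it produces (equivalently, that the radical center lies on the Apollonius circles). This is a single scalar equation in the single unknown $\rho_{\mathrm{s}}$, and the true sheet resistance is one of its roots; the task is to show it is the only admissible positive root, after which $\bm{p}_e$ follows uniquely. The exceptional configuration quoted in the statement—the condition on $a,b,c,d$ detailed in the appendix—is precisely the locus on which the elimination in the second stage degenerates (the two radical axes become parallel, so the coefficient matrix drops rank) or on which this scalar equation fails to isolate $\rho_{\mathrm{s}}$.

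The main obstacle I anticipate is the last stage. The ratios $\alpha$, $\beta$, $\gamma$ depend on $\rho_{\mathrm{s}}$ through exponentials $\exp(2\pi R^{ea}_{bc}/\rho_{\mathrm{s}})$ with generally incommensurate exponents, so the consistency equation is transcendental rather than polynomial and cannot be settled by root counting. Establishing that it admits a unique positive solution, and translating that requirement into the explicit, checkable condition on the reference geometry that defines the exceptional set, is the delicate part; I would approach it by analyzing the monotonic behavior of the consistency function as $\rho_{\mathrm{s}}$ varies and by isolating the determinant whose vanishing marks the degenerate configurations of $a,b,c,d$.
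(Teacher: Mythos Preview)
Your plan is close in spirit to the paper's argument: both linearize the three resistance constraints via the law-of-cosines identity and reduce to a linear system whose coefficient determinant furnishes the exceptional condition. The paper, however, parametrizes by the squared distances $\tilde{\bm{L}}=(l_{be},l_{ce},l_{de})$ rather than by $\bm{p}_e$: it writes the three measurements as the homogeneous system $A_r\bm{L}=0$ with $\bm{L}=(l_{ae},l_{be},l_{ce},l_{de})$, uses one cosine-rule identity to express $l_{ae}$ affinely in the other three, and obtains a $3\times3$ linear system $A_rA_g\tilde{\bm{L}}=-A_r\bm{b}_g$. The exceptional condition~\eqref{eq:4probes_condition} is precisely $|A_rA_g|=0$. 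Your $2\times2$ radical-axis system is an equivalent route to the same content; the paper's choice simply avoids writing $\bm{p}_e$ explicitly until the final line.

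Where you diverge is in the treatment of $\rho_{\mathrm{s}}$. You correctly note that the ratios $\alpha,\beta,\gamma$ carry $\rho_{\mathrm{s}}$ through exponentials, and you set up a separate transcendental consistency equation to isolate it. The paper's proof does \emph{not} carry out this step: it solves the $3\times3$ system with the $D$-factors (hence $\rho_{\mathrm{s}}$) treated as given constants, and the determinant condition~\eqref{eq:4probes_condition} is stated with those $D$'s already substituted rather than as a purely geometric constraint on $a,b,c,d$. In the application (the proof of Theorem~\ref{thm:numMeas}) the lemma is only ever invoked after $\rho_{\mathrm{s}}$ has been fixed, so the ``delicate part'' you flagged is simply bypassed rather than resolved. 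You are therefore attempting to prove more than the paper's argument actually establishes, and the obstacle you anticipate is real.

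One small inconsistency in your write-up: you first assert that non-collinearity of $a,b,c$ and of $b,c,d$ alone makes your $2\times2$ matrix nonsingular, and then later identify its singular locus as the exceptional condition. The second statement is the correct one---the rows of your radical-axis matrix are combinations like $(\alpha^2-1)\bm{v}_{ac}-(\beta^2-1)\bm{v}_{ab}$, so their independence depends on the measured ratios and on $\rho_{\mathrm{s}}$, not only on the reference geometry, and that dependence is exactly what~\eqref{eq:4probes_condition} encodes.
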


\begin{lemma} \label{thm:5probes}
    Let $a$, $b$, and $c$ denote reference probes, and let $d$ and $e$ denote any other two probes whose positions are unknown.
    Suppose that probes $a$, $b$, and $c$ are not aligned in a straight line, and resistances $R^{da}_{bc}$, $R^{da}_{cb}$, $R^{ea}_{bc}$, $R^{ea}_{cb}$, and $R^{bd}_{ce}$ are measured. 
    Then, the number of possible values of the sheet resistance $\rho_{\rm{s}}$, as well as the number of possible positions of probes $d$ and $e$ that realize the measured resistance, is finite.
    For each candidate of the sheet resistance $\rho_{\rm{s}}$, the positions of $d$ and $e$ have at most two candidates.
\end{lemma}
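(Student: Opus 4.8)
The plan is to split the statement into its two assertions and treat them separately. The bound of two candidate positions for $d$ and two for $e$ at a fixed $\rho_{\mathrm{s}}$ is immediate from Lemma~\ref{thm:3probes}: with $\rho_{\mathrm{s}}$ held fixed, the pair $(R^{da}_{bc}, R^{da}_{cb})$ together with the non-collinear reference probes $a,b,c$ constrains $d$ to at most two positions, and likewise $(R^{ea}_{bc}, R^{ea}_{cb})$ constrains $e$ to at most two positions. Hence the only real work is to show that at most finitely many $\rho_{\mathrm{s}}$ are compatible with all five measurements; the coupling measurement $R^{bd}_{ce}$, the single equation linking $d$ and $e$, is exactly what prevents a continuum of admissible $\rho_{\mathrm{s}}$.

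First I would make the Lemma~\ref{thm:3probes} construction explicit as a function of $\rho_{\mathrm{s}}$. Writing the two measurements for $d$ in the form of~\eqref{eq:R2} and using the known distances $s_{ab}, s_{ac}, s_{bc}$, they reduce to the two ratio conditions $s_{db} = k_1(\rho_{\mathrm{s}})\, s_{da}$ and $s_{dc} = k_2(\rho_{\mathrm{s}})\, s_{da}$, where $k_1, k_2$ are built from $C^{da}_{bc}(\rho_{\mathrm{s}})$ and $C^{da}_{cb}(\rho_{\mathrm{s}})$. Each condition defines an Apollonius circle, so $d$ is an intersection point of two circles whose centers and radii are real-analytic functions of $\rho_{\mathrm{s}}$ on $(0,\infty)$. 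Solving the two circle equations, the intersection coordinates are of the form $A(\rho_{\mathrm{s}}) \pm \sqrt{B(\rho_{\mathrm{s}})}$ with $A, B$ rational in $k_1^2, k_2^2$, and are therefore real-analytic branches $\bm{p}_d^{(i)}(\rho_{\mathrm{s}})$, $i \in \{1,2\}$, on each maximal open $\rho_{\mathrm{s}}$-interval where the circles meet (i.e.\ where $B > 0$). The same construction gives real-analytic branches $\bm{p}_e^{(j)}(\rho_{\mathrm{s}})$, $j \in \{1,2\}$.

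Next I would insert these branches into the coupling condition. Measurement $R^{bd}_{ce}$ reads, via~\eqref{eq:R2}, $s_{de}\, s_{bc} = C^{bd}_{ce}(\rho_{\mathrm{s}})\, s_{bd}\, s_{ce}$; substituting the chosen branches turns the condition into the vanishing of a single scalar function
\begin{equation}
  g_{ij}(\rho_{\mathrm{s}}) \coloneqq s_{de}\, s_{bc} - C^{bd}_{ce}(\rho_{\mathrm{s}})\, s_{bd}\, s_{ce},
\end{equation}
one for each branch combination $(i,j) \in \{1,2\}^2$, where every distance is evaluated at $\bm{p}_d^{(i)}(\rho_{\mathrm{s}})$ and $\bm{p}_e^{(j)}(\rho_{\mathrm{s}})$; the admissible $\rho_{\mathrm{s}}$ are exactly the zeros of some $g_{ij}$. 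Since $C^{bd}_{ce}(\rho_{\mathrm{s}}) = \exp(2\pi R^{bd}_{ce}/\rho_{\mathrm{s}})$ and the branch positions are real-analytic, each $g_{ij}$ is real-analytic on every interval of definition. A real-analytic function that is not identically zero has only isolated zeros, so on any compact subinterval each $g_{ij}$ vanishes finitely often; summing over the four branch combinations then yields finitely many admissible $\rho_{\mathrm{s}}$, and for each such $\rho_{\mathrm{s}}$ the first assertion bounds $d$ and $e$ by two positions each, giving the claimed global finiteness.

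The hard part is upgrading the local statement of isolated zeros to a genuinely finite count on all of $(0,\infty)$, and two issues must be resolved. First, I must rule out $g_{ij} \equiv 0$ on an interval: an identically vanishing branch would correspond to a one-parameter family of consistent triples $(\rho_{\mathrm{s}}, \bm{p}_d, \bm{p}_e)$ and is a genuine degeneracy, analogous to the exceptional configuration excluded in Lemma~\ref{thm:4probes}; I would show that $g_{ij} \equiv 0$ forces a non-generic alignment of $a,b,c,d,e$ and hence does not occur for reference probes in general position. Second, I must prevent zeros of $g_{ij}$ from accumulating at the finitely many interval endpoints where the Apollonius circles become tangent, and at the boundaries $\rho_{\mathrm{s}} \to 0^{+}$ and $\rho_{\mathrm{s}} \to \infty$. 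Here I would exploit the asymptotics of $C^{bd}_{ce}(\rho_{\mathrm{s}})$ (it tends to $1$ as $\rho_{\mathrm{s}} \to \infty$, and to $0$ or $\infty$ as $\rho_{\mathrm{s}} \to 0^{+}$ according to the sign of $R^{bd}_{ce}$) together with the limiting behavior of the branch positions to show that $g_{ij}$ is eventually of one sign near each boundary, so that no accumulation is possible. Controlling this boundary behavior, rather than the interior analyticity, is where the real effort lies.
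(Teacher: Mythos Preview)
Your route is genuinely different from the paper's. The paper does not follow real-analytic branches of Apollonius-circle intersections; it stays at the level of polynomial equations in the squared distances $l_{ad}$ and $l_{ae}$. The two Lemma~\ref{thm:3probes} constraints for $d$ and $e$ and the coupling condition (rewritten via the cosine rule) become three polynomials $f_d(l_{ad})$, $f_e(l_{ae})$, $f_{de}(l_{ad},l_{ae})$ whose coefficients are exponentials in $\sigma_{\mathrm s}=1/\rho_{\mathrm s}$. Eliminating $l_{ad}$ and then $l_{ae}$ by iterated resultants yields a single function
\[
g(\sigma_{\mathrm s})=\mathrm{Res}\bigl(f_e,\mathrm{Res}(f_d,f_{de},l_{ad}),l_{ae}\bigr)=\sum_{k\in K} a_k\, e^{k^\top\gamma\,\sigma_{\mathrm s}},
\]
an exponential polynomial that is entire in $\sigma_{\mathrm s}$. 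The paper then exhibits an explicit nonzero coefficient (the monomial $Z_1^8Z_3^4Z_4^4Z_5^4$ carries the coefficient $256\,l_{bc}^{16}/(|G_2|^8 l_{ac}^4)>0$), so $g\not\equiv 0$ unconditionally. A dominant-term argument confines the zeros to a bounded interval, and the identity theorem for analytic functions gives finiteness.

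Your plan is honest about its two hard parts, but one of them is a real gap. Your proposed way to exclude $g_{ij}\equiv 0$ is to argue it would force a ``non-generic alignment of $a,b,c,d,e$''; however the lemma assumes nothing about $d$ and $e$ beyond their existence, and only non-collinearity of $a,b,c$. You therefore cannot invoke genericity. The paper's resultant computation settles non-vanishing without any extra hypothesis by producing an explicit nonzero coefficient, something your branchwise functions (which involve $\sqrt{B(\rho_{\mathrm s})}$) do not readily expose. The second hard part---controlling zeros near the square-root branch points and near $\rho_{\mathrm s}\to 0^+,\infty$---is also sidestepped in the paper: because the elimination of $l_{ad},l_{ae}$ happens \emph{before} any branching, one gets a single globally entire function with no singular endpoints to worry about, and the $\sigma_{\mathrm s}\to\infty$ side is handled in one line by the dominant exponential. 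In short, the algebraic elimination trades your geometric picture for a single global object on which both analyticity and non-triviality are clean, which is precisely how the paper avoids the two obstacles you identified but did not resolve.
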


Appendix~\ref{app:proofs} provides the proofs of Lemmas~\ref{thm:3probes},~\ref{thm:4probes}, and~\ref{thm:5probes}.
These lemmas can be used to determine the minimum number of measurements required to reduce the unknown sheet resistance and probe positions to a finite number of candidates.

\begin{theorem} \label{thm:numMeas}
    Let $N$ denote the number of probes. 
    Suppose that the positions of the three reference probes are known and that any three of all probes are not aligned in a straight line. 
    Then, $2 + 3(N-4)$ resistance measurements are sufficient to determine two possible patterns of the unknown probe positions if the sheet resistance is known.
    Futhermore, even if the sheet resistance is unknown, the number of candidates for the sheet resistance and possible positions of the unknown probes is finite.
\end{theorem}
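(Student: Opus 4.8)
\emph{Proof strategy.} The plan is to determine the $N-3$ probes with unknown positions \emph{sequentially}, adding one probe at a time and carefully controlling how the positional ambiguity propagates. Write the three reference probes as $a$, $b$, $c$, and consider first the case in which the sheet resistance $\rho_{\mathrm{s}}$ is known. For the first unknown probe, say $d$, I would apply Lemma~\ref{thm:3probes} directly to the triple $(a,b,c)$: the two measurements $R^{da}_{bc}$ and $R^{da}_{cb}$ reduce the position of $d$ to at most two candidates, and these become the seed of the ``two patterns'' asserted in the statement. This step consumes the ``$2$'' in $2+3(N-4)$.

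For each of the remaining $N-4$ unknown probes, say $x$, I would spend exactly three measurements. Two of them, $R^{xa}_{bc}$ and $R^{xa}_{cb}$, fix the distance ratios $s_{xb}/s_{xa}$ and $s_{xc}/s_{xa}$ and hence constrain $x$ to (at most) two candidate points by the Apollonius-circle argument underlying Lemma~\ref{thm:3probes}; crucially, these two candidates are the \emph{same} regardless of which branch of $d$ was chosen, since they depend only on the fixed probes $a$, $b$, $c$. The third measurement is chosen to involve $d$, for instance $R^{ax}_{bd}$, which fixes the ratio $s_{xd}/s_{xa}$ and therefore a further Apollonius circle for the pair $(a,d)$. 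Because this circle depends on the position of $d$, it selects, for each of the two branches of $d$, a single one of the two candidates for $x$. This is the key mechanism keeping the ambiguity from multiplying: every subsequently added probe is slaved to the binary choice made for $d$, so the number of global patterns stays at two rather than $2^{N-3}$. Summing the budget gives $2+3(N-4)$.

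When $\rho_{\mathrm{s}}$ is unknown, I would replace the seeding step by a single application of Lemma~\ref{thm:5probes} to the first two unknown probes $d$ and $e$ against $(a,b,c)$, using its five prescribed measurements; this yields finitely many candidate values of $\rho_{\mathrm{s}}$ and, for each, at most two candidate pairs $(d,e)$. Fixing any one candidate value of $\rho_{\mathrm{s}}$ then reduces the problem to the known-resistance case, so the remaining $N-5$ probes can be added with three measurements each exactly as above. The measurement count is $5+3(N-5)=2+3(N-4)$, matching the known-resistance budget, and the total number of $(\rho_{\mathrm{s}},\text{positions})$ candidates is a finite product of finite sets, which gives the claimed finiteness.

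The main obstacle I anticipate is not the counting but the verification of the tie-breaking step: I must show that the linking Apollonius circle meets the two-point candidate set of each probe in exactly one point (neither zero nor two), for each branch of $d$, so that uniqueness per branch genuinely holds and the induction closes. This is a transversality/non-degeneracy statement, and establishing it rigorously requires excluding the exceptional configurations already flagged in Lemmas~\ref{thm:4probes} and~\ref{thm:5probes} (such as condition~\eqref{eq:4probes_condition}) and exploiting the hypothesis that no three probes are collinear to keep all the relevant circles in general position. I would therefore devote the bulk of the argument to checking that, under these assumptions, each added measurement is transverse to the existing candidate set, so that exactly two patterns survive in the known-$\rho_{\mathrm{s}}$ case and a finite candidate set survives otherwise.
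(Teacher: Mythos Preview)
Your seeding step --- Lemma~\ref{thm:3probes} for the first unknown probe when $\rho_{\mathrm{s}}$ is known, Lemma~\ref{thm:5probes} for the first two when it is not --- matches the paper exactly, and so does the measurement count. Where you diverge is in how each \emph{subsequent} probe is pinned down. The paper does not reuse Lemma~\ref{thm:3probes} plus a tie-breaking measurement; instead, once $d$ has been fixed to one of its two candidates, it treats $(a,b,c,d)$ as four reference probes and invokes Lemma~\ref{thm:4probes} directly: the three measurements $R^{ea}_{bc}$, $R^{ea}_{cd}$, $R^{ea}_{db}$ determine the next probe $e$ \emph{uniquely} (subject to the explicit non-degeneracy condition~\eqref{eq:4probes_condition}). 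Iterating this gives one configuration per branch of $d$, hence at most two patterns, with no further argument needed.

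Your route --- two measurements against $(a,b,c)$ to get two candidates for $x$, then one measurement involving $d$ to select a candidate per branch --- is a genuinely different decomposition, and it has the appealing feature that the two-point candidate set for $x$ is branch-independent. But it manufactures exactly the obstacle you flag: you must prove that the linking Apollonius circle through $(a,d)$ meets $\{x_1,x_2\}$ in precisely one point for each branch of $d$, and that the selection is globally coherent. This transversality claim is not obvious, is not covered by the no-three-collinear hypothesis alone, and would require its own non-degeneracy condition analogous to~\eqref{eq:4probes_condition}. The paper sidesteps all of this by letting Lemma~\ref{thm:4probes} absorb the uniqueness work. So your plan is not wrong, but it trades a one-line citation of Lemma~\ref{thm:4probes} for a substantial geometric verification; unless you have an independent reason to avoid Lemma~\ref{thm:4probes}, the paper's route is strictly shorter.
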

\begin{proof}
    Suppose the sheet resistance is known.
    For the reference probes, Lemma~\ref{thm:3probes} requires two measurements to determine at most two candidates for the fourth probe.
    Once the position of the fourth probe is fixed to one of the two candidates, Lemma~\ref{thm:4probes} requires three measurements to determine the position of the fifth probe uniquely.
    This can be repeated until the positions of all unknown probes are determined.
    The required measurements depend on the new probe at each step, and therefore, no measurements are redundant.
    Hence, the total number of measurements is $2 + 3(N-4)$.

    When the sheet resistance is unknown, Lemma~\ref{thm:5probes} ensures that the numbers of candidates for the sheet resistance and the possible positions of the fourth and fifth probes are finite, requiring five measurements.
    Once a candidate for the sheet resistance and positions of the fourth and fifth probes are fixed, Lemma~\ref{thm:4probes} requires three additional measurements for each unknown probe.
    Hence, the total number of measurements is $5 + 3(N-5) = 2 + 3(N-4)$, completing the proof.
\end{proof}

According to Theorem 2, the possible positions of the target probes are determined as two patterns from \textit{I--V} measurements when the number of reference probes is three and the sheet resistance is known. 
However, in a practical experimental system, there is a large possibility that one can choose the true pattern from two candidates because they typically differ clearly from each other. 

Figure~\ref{fig:2pattern} shows an example of the probe positions simulated for $N=8$, where probes 1--3 are reference probes. 
The two probe positions, A and B, provide identical resistance measurements for any quadruplet of probes, which can be verified using~\eqref{eq:R}. 
As shown in the figure, the positions of the target probes are completely different for each pattern. 
Therefore, in an actual experimental system, one can select the true pattern from two candidates by considering prior knowledge of the probe configurations, including the visually obtained rough layout of the probe positions.
\begin{figure}
    \centering
    \includegraphics[width=4.5cm]{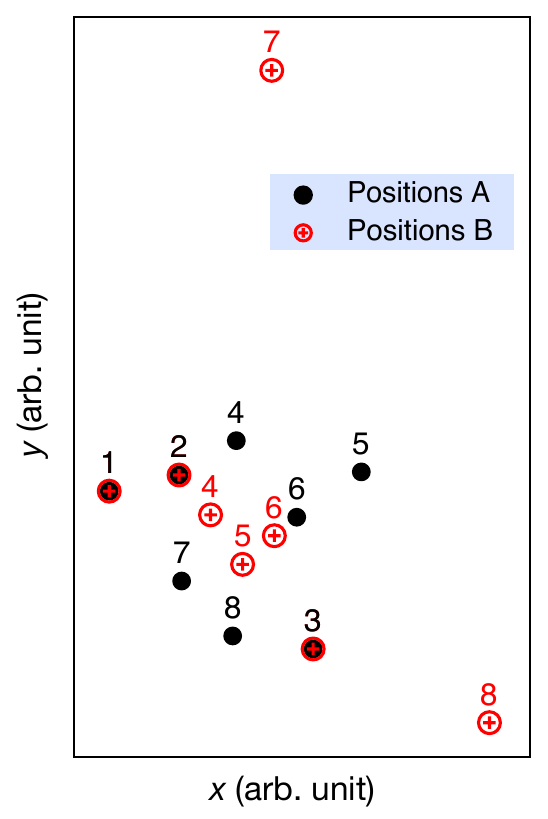}
    \caption{Example of the eight probe positions. We set probes 1--3 as reference probes. Two patterns are possible for the positions of probes 4--8: Positions A (black circles) and B (red circles). \textit{I--V} measurements with these two patterns yield the same results.}
    \label{fig:2pattern}
\end{figure}

Finally, the entire optimization procedure is summarized in Algorithm~\ref{alg:opt}, which is implemented using \texttt{Python} with the \texttt{SciPy} library and its optimization routine.
\begin{figure}[t]
    \begin{algorithm}[H]
        \centering
        \caption{Determination of Probe Positions\label{alg:opt}}
        \begin{algorithmic}[1]
            \Require{resistance measurements $R^{ab}_{cd}$ and $R^{ba}_{cd}$ for all quadruples of probes $(a, b, c, d)$, weight $w^{ab}_{cd}$ for each measurement, true positions of reference probes $\bm{p}_1$, $\bm{p}_2$, $\bm{p}_3$, and tolerance $\epsilon$ for minimization}
            \Ensure{estimates of the target probes' positions $\hat{\bm{p}}_i\ (i=4,5,\dots,N)$ and the sheet resistance $\hat{\rho}_{\mathrm{s}}$}
            \State{Set initial guesses of estimates $\hat{\bm{p}}_i$ and $\hat{\rho}_{\mathrm{s}}$\label{algline:init}}
            \State{Minimize cost function~\eqref{eq:opt} defined by $R^{ab}_{cd}$, $R^{ba}_{cd}$, and $w^{ab}_{cd}$ with respect to $\hat{\bm{p}}_i\ (i = 4, 5, \dots, N)$ and $\hat{\rho}_{\mathrm{s}}$ while fixing $\hat{\bm{p}}_i = \bm{p}_i\ (i = 1, 2, 3)$}
            \If{the cost becomes less than $\epsilon$}
                \State\Return{the estimates $\hat{\bm{p}}_i$ and $\hat{\rho}_{\mathrm{s}}$}
            \Else
                \State{Go back to Line~\ref{algline:init} and select different initial guesses}
            \EndIf
        \end{algorithmic}
    \end{algorithm}
\end{figure}

\section{EXPERIMENTAL DEMONSTRATION AND DISCUSSION}\label{sec:Exp}

\subsection{Conditions for practical applications}\label{subsec:conditions}

To adapt the method proposed in Section~\ref{sec:Math} to an actual experimental system, the following conditions are required for probes and a reference sheet:

\begin{enumerate}[label=(\roman*)]
    \item The reference sheet should be sufficiently large against the probe spacings so that the sheet can be regarded as an infinitely large sheet. \label{enum:size}
    
    \item The thickness of the reference sheet should be sufficiently thin so that the sheet can be regarded as a 2D sheet. \label{enum:thickness}
    
    \item The sheet should follow Ohm’s law, and sheet resistance should be sufficiently isotropic. In addition, the sheet resistance should be sufficiently uniform, so that resistances satisfy the following relations: $R^{ab}_{cd} = R^{ba}_{dc}$ and $R^{ab}_{cd} = R^{ba}_{cd}+R^{cb}_{ad}$.
    Temporal invariance of the sheet resistance during the \textit{I--V} measurement is also required. \label{enum:ohm}
    
    \item The contact areas between the probe and the sheet should be sufficiently small so that the areas can be regarded as points in the 2D sheet. In addition, the contacts should be ohmic. \label{enum:contact}

\end{enumerate}
Details are as follows:

\ref{enum:size} When the size of the reference sheet is finite, measured resistance values deviate from the value computed from~\eqref{eq:R}, which assumes the infinitely large sheet. In a simplified model with a square array of four probes centered on a circular sheet of diameter $d$, the deviation of $R^{ab}_{cd}$ from the infinitely large sheet model is approximately 5$\%$ when $d/s=10$~\cite{Vaughan1961,Miccoli2015}, where $s$ is the probe spacing. The deviation from the infinitely large sheet model becomes smaller for a larger $d/s$ ratio and is less than 2$\%$ when $d/s=16$.

\ref{enum:thickness} When the reference sheet has a finite thickness, measured resistance values deviate from the value computed from~\eqref{eq:R}.
In a simplified model, where four probes are equally spaced in a straight line on an infinitely large sheet with thickness $t$, the deviation from the ideal two-dimensional sheet is approximately 5$\%$  when $s'/t=2$~\cite{Albers1985,Miccoli2015}, where $s'$ is the probe spacing. The deviation from the ideal 2D sheet decreases to less than 2$\%$ and 1$\%$ when $s'/t=3$ and $s'/t=5$, respectively. Although the calculation models in (i) and (ii) are highly simplistic for practical usage, they remain helpful for estimating the deviations from the ideal 2D sheet.

\ref{enum:ohm} The reference sheet should follow Ohm's law in the measurement current range, and any materials that generate non-Ohmic voltages, such as the Hall voltage in a magnetized material, should not be used.
Generally, microscopic variations in sheet resistance are inevitable due to the presence of surface roughness, oxidization, or impurities (Fig.~\ref{fig:MicroscopeImage}). Sometimes, damages to the sheet may occur from the probes' approach. Even with such microscopic variations in the resistance, the proposed method is applied if the sheet resistance is sufficiently uniform within the probe distance range.  The uniformity of the reference sheet is confirmed by examining whether relations 
$R^{ab}_{cd} = R^{ba}_{dc}$ and $R^{ab}_{cd} = R^{ba}_{cd}+R^{cb}_{ad}$ hold in the \textit{I--V} measurements. 
Using a 2D sheet with isotropic resistance is also required. Contrary to the uniform resistivity, the confirmation of the isotropic resistivity requires multiple probes of known configuration~\cite{Kanagawa2003}.Thus, isotropic resistance must be ensured in advance. 
The temporal invariance of the sheet resistance is confirmed by repeating \textit{I--V} measurements at both the beginning and end of a series of measurements with an identical set of probes. 
\begin{figure}
    \centering
    \includegraphics[width=9cm]{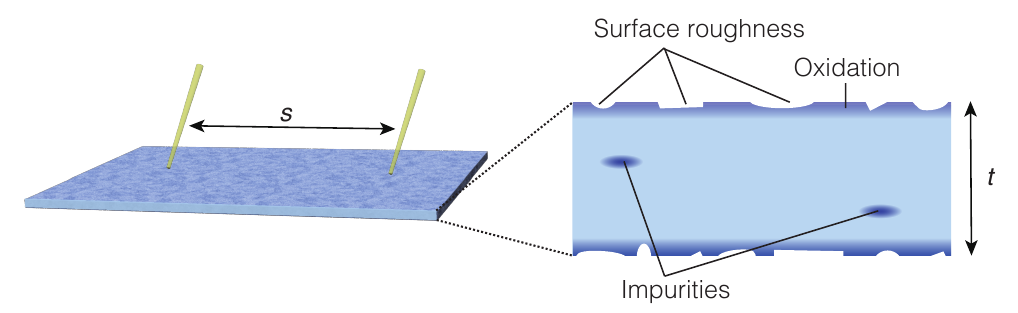}
    \caption{Microscopic image of a reference sheet. A reference sheet with surface roughness, oxidation, and impurities, in a microscopic view, can be adopted as the reference sheet when the sheet and the probes satisfy the conditions shown in Subsection~\ref{subsec:conditions}.}
    \label{fig:MicroscopeImage}
\end{figure}

\ref{enum:contact} The size of the contact area between the probes and the sheet should be much smaller than the probe distances. Particularly, the uncertainty of the reference probe positions directly affects the positioning accuracy of the target probes. In addition, ohmic contact is required between the probes and the sheet. Contrary to non-ohmic contacts, which include \textit{p--n} junctions and Schottky barriers, linear \textit{I--V} curves are obtained with the ohmic contact. The contact types depend on the materials of probes and sheets, and the ohmic contact is verified through linear curves of \textit{I--V} measurements.

Finally, some conditions, such as the isotropic resistivity of the reference sheet, cannot be verified within the proposed method and hence must be ensured in advance, that is, in the preparation process of the reference sheet. This is because the main application of the multi-probes, whose positions are determined by the proposed method, includes the measurements of anisotropic resistivity. We believe that preparing a reference sheet that meets the conditions in \ref{enum:size}--\ref{enum:contact} is more feasible than installing a precise measurement and control system of probe positions in the UHV chamber.

\subsection{Experimental demonstration}
The results of the theoretical analysis were experimentally demonstrated for the case with $N = 9$ as shown in Fig.~\ref{fig:ExpSetup}(a). Here, three probes with $n=1$ to 3 were used as the reference probes of known configuration. On the other hand, the positions of the six target probes ($n=4,5,\dots,9$) and the sheet resistance of the reference sheet were set as the parameters that should be determined by minimizing~\eqref{eq:opt}. To avoid external noise, a measurement stage and the probes were installed in a vacuum chamber. The probes were contacted with the film surface with the $z$-motion of the probe holder, similar to the schematic shown in Fig.~\ref{fig:Intro}. Each probe was connected to an \textit{I--V} measurement device using a switching system outside the vacuum chamber.
Contact of all the probes with the reference sheet was confirmed by monitoring the electrical resistance between each probe and the sheet.

A stainless steel (grade 304) film, which was \SI{10}{\um} in thickness, approximately \SI{13}{\mm}~$\times$~\SI{13}{\mm} in size, and placed on a glass substrate, was employed as a reference sheet.
Since the probe spacing was larger than \SI{80}{\um} for the current setup, the stainless steel film was adequately regarded as a two-dimensional sheet in the variation of less than 1$\%$. In addition, since probes with an average spacing of approximately \SI{610}{\um} were placed near the center of a reference sheet, the average variation from the infinity large sheet model was assumed to be less than 2$\%$.
Therefore, the sheet is adequately considered as an infinitely large two-dimensional sheet and the mathematical method presented in Section~\ref{sec:Math} is applicable.

First, to demonstrate the argument of Theorem~\ref{thm:indepMeas}, \textit{I--V} measurements were conducted for $\tbinom{9}{4} \times 6 = 756$ patterns of the probes, where multiplication by $6$ represents the six permutations for each quadruple of the probes. 
Although the number of permutations in each quadruple was $4! = 24$, we omitted trivial permutations, such as the exchange of $V^+$ and $V^-$ or $I^+$ and $I^-$, which led to $\frac{24}{2 \times 2} = 6$. 
According to Theorem~\ref{thm:indepMeas}, for a quadruple of probes $(a, b, c, d)$, the resistance measurements $R^{ab}_{cd}$ and $R^{ba}_{cd}$ are sufficient to determine the other four measurements. 
We therefore compared the four measurements with the predicted values computed from $R^{ab}_{cd}$ and $R^{ba}_{cd}$ theoretically.
For all the $\tbinom{9}{4} \times (6 - 2) = 504$ predictions, $90\%$ of them were within $4\%$ error of the measurements, which confirms the validity of Theorem~\ref{thm:indepMeas}.
Next, to determine the probe positions by solving~\eqref{eq:opt}, $\tbinom{9}{4} \times 2 = 252$ measurements were used, where the multiplication by $2$ corresponds to the independent measurements $R^{ab}_{cd}$ and $R^{ba}_{cd}$ for each quadruple. 

The current passing through the two probes varied from \SIrange{-10}{10}{\mA}, and the voltage difference was measured across the other two probes. Representative results of these measurements are presented in Fig.~\ref{fig:ExpSetup}(b). Linear \textit{I--V} curves were obtained for all combinations, indicating the formation of ohmic contacts between the tungsten probes and the sheet. Fluctuations in the acquired data points varied depending on the alignment of the measurement probes. Consequently, the reliability of the resistance values derived from linear fittings varied. 
Therefore, the residue in the linear fitting served as an index of the measurement reliability. 
Specifically, the square root of the residue, whose unit is \si{\ohm}, is divided by the absolute value of the resistance measurement for obtaining a dimensionless index. 
The weight $w^{ab}_{cd}$ in~\eqref{eq:opt} is then set as the reciprocal of the index to prioritize reliable measurements.

As a consequence of the nonlinear minimization, the target probe positions and the sheet resistance, $\rho_{\rm{s}}$, are obtained. As shown in Fig.~\ref{fig:ExpSetup}(c), the positions of the target probes are in excellent agreement with the actual probe positions seen in the microscope image. The resultant mean error is $\SI{7.6}{\um}$, corresponding to an error of approximately 1.2$\%$ against the mean probe spacings. 
In addition, $\rho_{\mathrm{s}}= \SI{.0798}{\ohm}/\square$\footnote{The symbol $\si{\ohm}/\square$ denotes the unit of sheet resistance.} is determined. 
By adopting the thickness of the reference sheet used in this demonstration ($t = \SI{10}{\um}$), a resistivity of \SI{.798e-6}{\ohm\m} was obtained. Although there could be a potential imprecision in the surface roughness, oxidization, or impurities (Fig.~\ref{fig:MicroscopeImage}), the obtained resistivity agreed well with the reported value for the stainless steel ($\rho = \SI{.713e-6}{\ohm\m}$ at room temperature~\cite{Ho}). 
These results reinforce the validity of the mathematical method, demonstrating that the sheet resistance as well as the positions of the probes can be determined from \textit{I--V} measurements with the three reference probes.

\begin{figure}
    \centering
    \includegraphics[width=9cm]{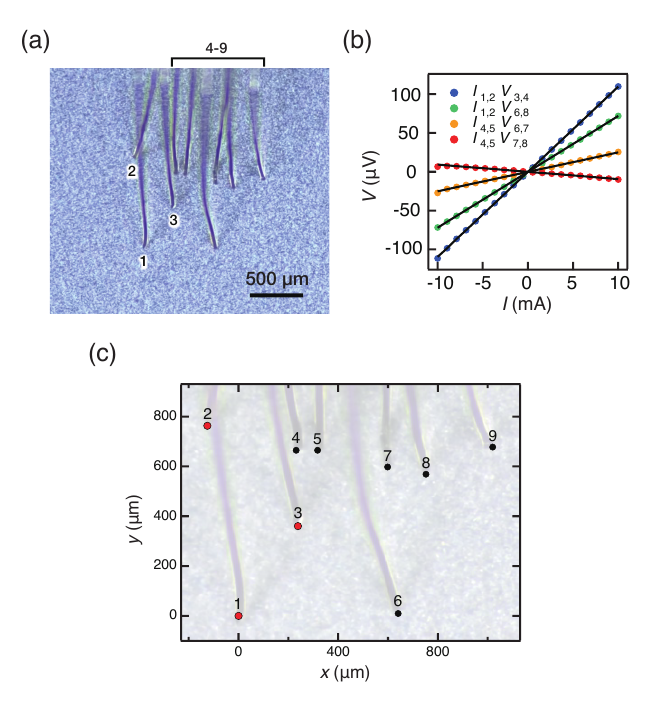}
    \caption{Experimental demonstration of the multi-probe determination. 
    (a) Microscope image of nine probes placed on the stainless-steel sheet. (b) Representative data of \textit{I--V} measurements. Here, the voltage offsets are adjusted so that the \textit{I--V} curves cross \SI{0}{\uV} at \SI{0}{\mA}. (c) The positions of the target probes (black circles) obtained from the mathematical analysis. Red circles represent the positions of the reference probes obtained from the microscope image.}
    \label{fig:ExpSetup}
\end{figure}

\subsection{Discussion}

In Section~\ref{sec:Math}, we proved that when the number of reference probes is three and $N\geq 5$, the number of possible sheet resistances is limited to be finite, and for each sheet resistance, two patterns of target probe arrangements exist. This implies that other minima can exist for~\ref{eq:opt}.
However, the demonstration in the previous section reveals that the equation reaches its minimum with only one $\rho_{\rm{s}}$ value and a single pattern of the target probes. In the minimization process, the same results were obtained even when the initial values for $\rho_{\rm{s}}$, which differed by several digits from the true value, were used.
This suggests that either no other minima exist for $\rho_{\rm{s}}$ or they are significantly different. The failure to converge to the second pattern of the target probe arrangements can be attributed to the exclusion of extreme cases, such as when the distance between probes is narrower than the diameter of the probes themselves. This demonstration indicates that using three reference probes in actual experiments allows for the unique determination of the $\rho_{\rm{s}}$ value and the arrangement of the target probes.

Given that the constraints discussed in Subsection~\ref{subsec:conditions} are met, the arrangements of target and reference probes are designed depending on the precision of the \textit{I--V} measurements. 
Two examples of probe arrangements are presented in Fig.~\ref{fig:PossibleSetup}. Here, the reference probes at fixed positions are placed away from or around the target probes.  By using a sheet sample with an uniform resistivity, it becomes possible to obtain the electronic transport properties of the sample while determining the overall positions of the target probes. 

The proposed method can be applied to any sheets that satisfy the conditions listed in Subsection~\ref{subsec:conditions}. This means that the materials for the reference sheet are not limited to metals.
Layers with isotropic conductive properties, such as graphite and two-dimensional metallic gas formed on an insulator, are also applicable as the reference sheet. On the other hand, the robustness of the reference sheet against repeated approaches of the probes becomes a crucial factor for long-term use. In this case, the use of a sheet or deposited film made from metals, such as stainless steel, titanium, and tungsten, is the prime candidate.

In addition to the reference sheet, the robustness of probes during repeated contact and withdrawal operations is crucial. The robustness strongly depends on the material and structure of the probes. Kjaer \textit{et al}. and Yamada \textit{et al}. evaluated the robustness of probes through repeated approaches using microscale four probes on samples and demonstrated that stable results were obtained over ten cycles of probe contacts~\cite{Kjaer2008,Yamada2012}. Their results indicate that repeated measurements are possible without significant probe shift provided that the probe material and structure are designed appropriately.

To increase the stability of the contact points for a larger number of microscale probes ($\geq 5$) or even for non-linear multi-probe arrangements, one may use probes aligned with the surface normal, as illustrated in Fig.~\ref{fig:Intro}. In typical multi-probe systems where probes are arranged obliquely~\cite{Shiraki2001,Tanikawa2003,Kim2007,Kjaer2008,Yamada2012}, deflection and horizontal sliding can occur upon engagement, and the resulting shifts in the contact points can become larger as the height variation among the probes increases (generally, as the number of probes increases). 
When employing vertically aligned probes, each probe should possess sufficient elasticity in the vertical direction to ensure reliable contact with the surface without horizontal shift. A potential implementation of this design involves fabricating probe tips with multiple vertically aligned holes through which metallic wires are inserted.

On the other hand, to ensure the contact positions of the reference probes, which directly affect the accuracy of determining target probes via \textit{I--V} measurements, the adoption of an integrated probe-and-sheet system may be an important candidate, where reference probes are fabricated on the reference sheet as illustrated in Fig.~\ref{fig:PossibleSetup}(c). Such an integrated probe-and-sheet system can be mounted on a sample holder and used for determining the target probe positions.
By combining such an integrated probe-and-sheet system with vertically aligned target probes, both accurate determinations of the target probe positions and stable measurements of samples are expected.

The proposed method using \textit{I--V} measurements is particularly useful for determining probe positions that are difficult to access with microscopy, for example, when the probes are hidden behind a probe holder (as illustrated in Fig.~\ref{fig:Intro}) or when the probes are placed inside a radiation shield.
As mentioned previously, the position of the target probes can be determined independently of the resistivity of the reference sheet. Therefore, this method can be applied even at low or high temperatures, where the resistivity of the reference sheet can be changed. 
Although further technical advancements are required to fabricate multi-probe structures at the submicron scale, our proposed probe positioning method is applicable to determining probe positions even at such scales.
Thus, the proposed method simplifies the experimental systems and is useful in extreme environments with temperature variations or where microscopes cannot be installed, thereby providing flexibility in the design of measurement systems and advancing the studies of atomically thin films and surfaces.

Finally, the proposed method provides an approach for determining the multi-probe positions using a reference sheet. The multi-probes of known configuration are applied to measure a target sample of interest, which may exhibit complicated conducting properties such as anisotropic, nonlinear, or mixed 2D and 3D conductions. The measurements are achieved by transferring the target sample before or after the determination of multi-probes with the proposed method. The target sample includes a single-atomic-layer sheet studied in detail by modern scanning tunneling spectroscopy (STM) techniques~\cite{Liu2022}, and also includes a three-dimensional bulk material. Combining STM results with conduction properties obtained using microscale or even submicron-scale multi-probes will provide a deeper insight into microscopic electron transport on films or surfaces.

\begin{figure}
    \centering
    \includegraphics[width=9cm]{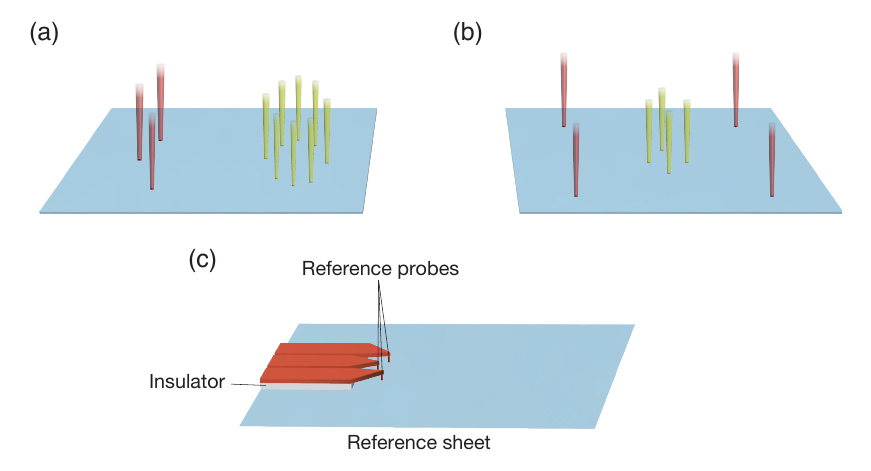}
    \caption{ Examples of reference and target probe configurations for multi-probe measurements. (a) Target probes are placed away from the reference probes. (b) Target probes are surrounded by the reference probes. (c) An integrated probe-and-sheet system, where the reference probes are fabricated on the reference sheet. }
    \label{fig:PossibleSetup}
\end{figure}

\section{CONCLUSIONS}

We proposed a method for determining the positions of multi-probes placed on a two-dimensional sheet. We mathematically proved that the positions of the target probes can be determined via \textit{I--V} measurements. The method was experimentally demonstrated, where the positions of the reference probes and the sheet resistivity were determined from \textit{I--V} measurements with the three reference probes. 
The validity of the method is mathematically guaranteed, and it is applicable even in extreme environments where microscopes are not accessible and are under high and low temperatures.  
Since there are increasing demand for multi-probe measurements at the microscale or submicron-scale order in such environments, the proposed method is expected to advance the research on thin films and surfaces.

\section{ACKNOWLEDGMENTS}
We thank K. Ozawa, K. Sakamoto, and Y. Takeichi for their helpful discussions and acknowledge R. Akiyama and R. Hobara for their advice on designing the setup for the microscale multi-probe measurements. This work was financially supported by Grants-in-Aid for Scientific Research (Grant Nos. 23H01106, 22K17855, and 23K17878) from the Japan Society for the Promotion of Science (JSPS).


\begin{thebibliography}{36}%
\makeatletter
\providecommand \@ifxundefined [1]{%
 \@ifx{#1\undefined}
}%
\providecommand \@ifnum [1]{%
 \ifnum #1\expandafter \@firstoftwo
 \else \expandafter \@secondoftwo
 \fi
}%
\providecommand \@ifx [1]{%
 \ifx #1\expandafter \@firstoftwo
 \else \expandafter \@secondoftwo
 \fi
}%
\providecommand \natexlab [1]{#1}%
\providecommand \enquote  [1]{``#1''}%
\providecommand \bibnamefont  [1]{#1}%
\providecommand \bibfnamefont [1]{#1}%
\providecommand \citenamefont [1]{#1}%
\providecommand \href@noop [0]{\@secondoftwo}%
\providecommand \href [0]{\begingroup \@sanitize@url \@href}%
\providecommand \@href[1]{\@@startlink{#1}\@@href}%
\providecommand \@@href[1]{\endgroup#1\@@endlink}%
\providecommand \@sanitize@url [0]{\catcode `\\12\catcode `\$12\catcode
  `\&12\catcode `\#12\catcode `\^12\catcode `\_12\catcode `\%12\relax}%
\providecommand \@@startlink[1]{}%
\providecommand \@@endlink[0]{}%
\providecommand \url  [0]{\begingroup\@sanitize@url \@url }%
\providecommand \@url [1]{\endgroup\@href {#1}{\urlprefix }}%
\providecommand \urlprefix  [0]{URL }%
\providecommand \Eprint [0]{\href }%
\providecommand \doibase [0]{https://doi.org/}%
\providecommand \selectlanguage [0]{\@gobble}%
\providecommand \bibinfo  [0]{\@secondoftwo}%
\providecommand \bibfield  [0]{\@secondoftwo}%
\providecommand \translation [1]{[#1]}%
\providecommand \BibitemOpen [0]{}%
\providecommand \bibitemStop [0]{}%
\providecommand \bibitemNoStop [0]{.\EOS\space}%
\providecommand \EOS [0]{\spacefactor3000\relax}%
\providecommand \BibitemShut  [1]{\csname bibitem#1\endcsname}%
\let\auto@bib@innerbib\@empty
\bibitem [{\citenamefont {Novoselov}\ \emph {et~al.}(2004)\citenamefont
  {Novoselov}, \citenamefont {Geim}, \citenamefont {Morozov}, \citenamefont
  {Jiang}, \citenamefont {Zhang}, \citenamefont {Dubonos}, \citenamefont
  {Grigorieva},\ and\ \citenamefont {Firsov}}]{Novoselov2004}%
  \BibitemOpen
  \bibfield  {author} {\bibinfo {author} {\bibfnamefont {K.~S.}\ \bibnamefont
  {Novoselov}}, \bibinfo {author} {\bibfnamefont {A.~K.}\ \bibnamefont {Geim}},
  \bibinfo {author} {\bibfnamefont {S.~V.}\ \bibnamefont {Morozov}}, \bibinfo
  {author} {\bibfnamefont {D.}~\bibnamefont {Jiang}}, \bibinfo {author}
  {\bibfnamefont {Y.}~\bibnamefont {Zhang}}, \bibinfo {author} {\bibfnamefont
  {S.~V.}\ \bibnamefont {Dubonos}}, \bibinfo {author} {\bibfnamefont {I.~V.}\
  \bibnamefont {Grigorieva}},\ and\ \bibinfo {author} {\bibfnamefont {A.~A.}\
  \bibnamefont {Firsov}},\ }\bibfield  {title} {\bibinfo {title} {Electric
  field effect in atomically thin carbon films},\ }\href@noop {} {\bibfield
  {journal} {\bibinfo  {journal} {Science}\ }\textbf {\bibinfo {volume}
  {306}},\ \bibinfo {pages} {666} (\bibinfo {year} {2004})}\BibitemShut
  {NoStop}%
\bibitem [{\citenamefont {Zhang}\ \emph {et~al.}(2005)\citenamefont {Zhang},
  \citenamefont {Tan}, \citenamefont {Stormer},\ and\ \citenamefont
  {Kim}}]{Zhang2005}%
  \BibitemOpen
  \bibfield  {author} {\bibinfo {author} {\bibfnamefont {Y.}~\bibnamefont
  {Zhang}}, \bibinfo {author} {\bibfnamefont {Y.-W.}\ \bibnamefont {Tan}},
  \bibinfo {author} {\bibfnamefont {H.~L.}\ \bibnamefont {Stormer}},\ and\
  \bibinfo {author} {\bibfnamefont {P.}~\bibnamefont {Kim}},\ }\bibfield
  {title} {\bibinfo {title} {Experimental observation of the quantum hall
  effect and berry's phase in graphene},\ }\href@noop {} {\bibfield  {journal}
  {\bibinfo  {journal} {Nature}\ }\textbf {\bibinfo {volume} {438}},\ \bibinfo
  {pages} {201} (\bibinfo {year} {2005})}\BibitemShut {NoStop}%
\bibitem [{\citenamefont {Schwierz}\ \emph {et~al.}(2015)\citenamefont
  {Schwierz}, \citenamefont {Pezoldt},\ and\ \citenamefont
  {Granzner}}]{Schwierz2015}%
  \BibitemOpen
  \bibfield  {author} {\bibinfo {author} {\bibfnamefont {F.}~\bibnamefont
  {Schwierz}}, \bibinfo {author} {\bibfnamefont {J.}~\bibnamefont {Pezoldt}},\
  and\ \bibinfo {author} {\bibfnamefont {R.}~\bibnamefont {Granzner}},\
  }\bibfield  {title} {\bibinfo {title} {Two-dimensional materials and their
  prospects in transistor electronics},\ }\href@noop {} {\bibfield  {journal}
  {\bibinfo  {journal} {Nanoscale}\ }\textbf {\bibinfo {volume} {7}},\ \bibinfo
  {pages} {8261} (\bibinfo {year} {2015})}\BibitemShut {NoStop}%
\bibitem [{\citenamefont {Wang}\ \emph {et~al.}(2012)\citenamefont {Wang},
  \citenamefont {Li}, \citenamefont {Zhang}, \citenamefont {Zhang},
  \citenamefont {Zhang}, \citenamefont {Li}, \citenamefont {Ding},
  \citenamefont {Ou}, \citenamefont {Deng}, \citenamefont {Chang},
  \citenamefont {Wen}, \citenamefont {Song}, \citenamefont {He}, \citenamefont
  {Jia}, \citenamefont {Ji}, \citenamefont {Wang}, \citenamefont {Wang},
  \citenamefont {Chen}, \citenamefont {Ma},\ and\ \citenamefont
  {Xue}}]{Wang2012}%
  \BibitemOpen
  \bibfield  {author} {\bibinfo {author} {\bibfnamefont {Q.-Y.}\ \bibnamefont
  {Wang}}, \bibinfo {author} {\bibfnamefont {Z.}~\bibnamefont {Li}}, \bibinfo
  {author} {\bibfnamefont {W.-H.}\ \bibnamefont {Zhang}}, \bibinfo {author}
  {\bibfnamefont {Z.-C.}\ \bibnamefont {Zhang}}, \bibinfo {author}
  {\bibfnamefont {J.-S.}\ \bibnamefont {Zhang}}, \bibinfo {author}
  {\bibfnamefont {W.}~\bibnamefont {Li}}, \bibinfo {author} {\bibfnamefont
  {H.}~\bibnamefont {Ding}}, \bibinfo {author} {\bibfnamefont {Y.-B.}\
  \bibnamefont {Ou}}, \bibinfo {author} {\bibfnamefont {P.}~\bibnamefont
  {Deng}}, \bibinfo {author} {\bibfnamefont {K.}~\bibnamefont {Chang}},
  \bibinfo {author} {\bibfnamefont {J.}~\bibnamefont {Wen}}, \bibinfo {author}
  {\bibfnamefont {C.-L.}\ \bibnamefont {Song}}, \bibinfo {author}
  {\bibfnamefont {K.}~\bibnamefont {He}}, \bibinfo {author} {\bibfnamefont
  {J.-F.}\ \bibnamefont {Jia}}, \bibinfo {author} {\bibfnamefont {S.-H.}\
  \bibnamefont {Ji}}, \bibinfo {author} {\bibfnamefont {Y.-Y.}\ \bibnamefont
  {Wang}}, \bibinfo {author} {\bibfnamefont {L.-L.}\ \bibnamefont {Wang}},
  \bibinfo {author} {\bibfnamefont {X.}~\bibnamefont {Chen}}, \bibinfo {author}
  {\bibfnamefont {X.-C.}\ \bibnamefont {Ma}},\ and\ \bibinfo {author}
  {\bibfnamefont {Q.-K.}\ \bibnamefont {Xue}},\ }\bibfield  {title} {\bibinfo
  {title} {Interface-induced high-temperature superconductivity in single
  unit-cell {FeSe} films on {SrTiO$_3$}},\ }\href@noop {} {\bibfield  {journal}
  {\bibinfo  {journal} {Chinese Phys. Lett.}\ }\textbf {\bibinfo {volume}
  {29}},\ \bibinfo {pages} {037402} (\bibinfo {year} {2012})}\BibitemShut
  {NoStop}%
\bibitem [{\citenamefont {Ichinokura}\ \emph {et~al.}(2016)\citenamefont
  {Ichinokura}, \citenamefont {Sugawara}, \citenamefont {Takayama},
  \citenamefont {Takahashi},\ and\ \citenamefont {Hasegawa}}]{Ichinokura2016}%
  \BibitemOpen
  \bibfield  {author} {\bibinfo {author} {\bibfnamefont {S.}~\bibnamefont
  {Ichinokura}}, \bibinfo {author} {\bibfnamefont {K.}~\bibnamefont
  {Sugawara}}, \bibinfo {author} {\bibfnamefont {A.}~\bibnamefont {Takayama}},
  \bibinfo {author} {\bibfnamefont {T.}~\bibnamefont {Takahashi}},\ and\
  \bibinfo {author} {\bibfnamefont {S.}~\bibnamefont {Hasegawa}},\ }\bibfield
  {title} {\bibinfo {title} {Superconducting calcium-intercalated bilayer
  graphene},\ }\href@noop {} {\bibfield  {journal} {\bibinfo  {journal} {ACS
  Nano}\ }\textbf {\bibinfo {volume} {10}},\ \bibinfo {pages} {2761} (\bibinfo
  {year} {2016})}\BibitemShut {NoStop}%
\bibitem [{\citenamefont {Tsen}\ \emph {et~al.}(2016)\citenamefont {Tsen},
  \citenamefont {Hunt}, \citenamefont {Kim}, \citenamefont {Yuan},
  \citenamefont {Jia}, \citenamefont {Cava}, \citenamefont {Hone},
  \citenamefont {Kim}, \citenamefont {Dean},\ and\ \citenamefont
  {Pasupathy}}]{Tsen2016}%
  \BibitemOpen
  \bibfield  {author} {\bibinfo {author} {\bibfnamefont {A.~W.}\ \bibnamefont
  {Tsen}}, \bibinfo {author} {\bibfnamefont {B.}~\bibnamefont {Hunt}}, \bibinfo
  {author} {\bibfnamefont {Y.~D.}\ \bibnamefont {Kim}}, \bibinfo {author}
  {\bibfnamefont {Z.~J.}\ \bibnamefont {Yuan}}, \bibinfo {author}
  {\bibfnamefont {S.}~\bibnamefont {Jia}}, \bibinfo {author} {\bibfnamefont
  {R.~J.}\ \bibnamefont {Cava}}, \bibinfo {author} {\bibfnamefont
  {J.}~\bibnamefont {Hone}}, \bibinfo {author} {\bibfnamefont {P.}~\bibnamefont
  {Kim}}, \bibinfo {author} {\bibfnamefont {C.~R.}\ \bibnamefont {Dean}},\ and\
  \bibinfo {author} {\bibfnamefont {A.~N.}\ \bibnamefont {Pasupathy}},\
  }\bibfield  {title} {\bibinfo {title} {Nature of the quantum metal in a
  two-dimensional crystalline superconductor},\ }\href@noop {} {\bibfield
  {journal} {\bibinfo  {journal} {Nat. Phys.}\ }\textbf {\bibinfo {volume}
  {12}},\ \bibinfo {pages} {208} (\bibinfo {year} {2016})}\BibitemShut
  {NoStop}%
\bibitem [{\citenamefont {Tanikawa}\ \emph {et~al.}(2004)\citenamefont
  {Tanikawa}, \citenamefont {Matsuda}, \citenamefont {Kanagawa},\ and\
  \citenamefont {Hasegawa}}]{Tanikawa2004}%
  \BibitemOpen
  \bibfield  {author} {\bibinfo {author} {\bibfnamefont {T.}~\bibnamefont
  {Tanikawa}}, \bibinfo {author} {\bibfnamefont {I.}~\bibnamefont {Matsuda}},
  \bibinfo {author} {\bibfnamefont {T.}~\bibnamefont {Kanagawa}},\ and\
  \bibinfo {author} {\bibfnamefont {S.}~\bibnamefont {Hasegawa}},\ }\bibfield
  {title} {\bibinfo {title} {Surface-state electrical conductivity at a
  metal-insulator transition on silicon},\ }\href@noop {} {\bibfield  {journal}
  {\bibinfo  {journal} {Phys. Rev. Lett.}\ }\textbf {\bibinfo {volume} {93}},\
  \bibinfo {pages} {016801} (\bibinfo {year} {2004})}\BibitemShut {NoStop}%
\bibitem [{\citenamefont {Radisavljevic}\ and\ \citenamefont
  {Kis}(2013)}]{Radisavljevic2013}%
  \BibitemOpen
  \bibfield  {author} {\bibinfo {author} {\bibfnamefont {B.}~\bibnamefont
  {Radisavljevic}}\ and\ \bibinfo {author} {\bibfnamefont {A.}~\bibnamefont
  {Kis}},\ }\bibfield  {title} {\bibinfo {title} {Mobility engineering and a
  metal^^e2^^80^^93insulator transition in monolayer {MoS$_2$}},\ }\href@noop
  {} {\bibfield  {journal} {\bibinfo  {journal} {Nat. Mater.}\ }\textbf
  {\bibinfo {volume} {12}},\ \bibinfo {pages} {815} (\bibinfo {year}
  {2013})}\BibitemShut {NoStop}%
\bibitem [{\citenamefont {Yukawa}\ \emph {et~al.}(2021)\citenamefont {Yukawa},
  \citenamefont {Kobayashi}, \citenamefont {Kanda}, \citenamefont {Shiga},
  \citenamefont {Yoshimatsu}, \citenamefont {Ishibashi}, \citenamefont
  {Minohara}, \citenamefont {Kitamura}, \citenamefont {Horiba}, \citenamefont
  {Santander-Syro},\ and\ \citenamefont {Kumigashira}}]{Yukawa2021}%
  \BibitemOpen
  \bibfield  {author} {\bibinfo {author} {\bibfnamefont {R.}~\bibnamefont
  {Yukawa}}, \bibinfo {author} {\bibfnamefont {M.}~\bibnamefont {Kobayashi}},
  \bibinfo {author} {\bibfnamefont {T.}~\bibnamefont {Kanda}}, \bibinfo
  {author} {\bibfnamefont {D.}~\bibnamefont {Shiga}}, \bibinfo {author}
  {\bibfnamefont {K.}~\bibnamefont {Yoshimatsu}}, \bibinfo {author}
  {\bibfnamefont {S.}~\bibnamefont {Ishibashi}}, \bibinfo {author}
  {\bibfnamefont {M.}~\bibnamefont {Minohara}}, \bibinfo {author}
  {\bibfnamefont {M.}~\bibnamefont {Kitamura}}, \bibinfo {author}
  {\bibfnamefont {K.}~\bibnamefont {Horiba}}, \bibinfo {author} {\bibfnamefont
  {A.~F.}\ \bibnamefont {Santander-Syro}},\ and\ \bibinfo {author}
  {\bibfnamefont {H.}~\bibnamefont {Kumigashira}},\ }\bibfield  {title}
  {\bibinfo {title} {Resonant tunneling driven metal-insulator transition in
  double quantum-well structures of strongly correlated oxide},\ }\href@noop {}
  {\bibfield  {journal} {\bibinfo  {journal} {Nat. Commun.}\ }\textbf {\bibinfo
  {volume} {12}},\ \bibinfo {pages} {7070} (\bibinfo {year}
  {2021})}\BibitemShut {NoStop}%
\bibitem [{\citenamefont {Hsieh}\ \emph {et~al.}(2008)\citenamefont {Hsieh},
  \citenamefont {Qian}, \citenamefont {Wray}, \citenamefont {Xia},
  \citenamefont {Hor}, \citenamefont {Cava},\ and\ \citenamefont
  {Hasan}}]{Hsieh2008}%
  \BibitemOpen
  \bibfield  {author} {\bibinfo {author} {\bibfnamefont {D.}~\bibnamefont
  {Hsieh}}, \bibinfo {author} {\bibfnamefont {D.}~\bibnamefont {Qian}},
  \bibinfo {author} {\bibfnamefont {L.}~\bibnamefont {Wray}}, \bibinfo {author}
  {\bibfnamefont {Y.}~\bibnamefont {Xia}}, \bibinfo {author} {\bibfnamefont
  {Y.~S.}\ \bibnamefont {Hor}}, \bibinfo {author} {\bibfnamefont {R.~J.}\
  \bibnamefont {Cava}},\ and\ \bibinfo {author} {\bibfnamefont {M.~Z.}\
  \bibnamefont {Hasan}},\ }\bibfield  {title} {\bibinfo {title} {A topological
  dirac insulator in a quantum spin hall phase},\ }\href@noop {} {\bibfield
  {journal} {\bibinfo  {journal} {Nature}\ }\textbf {\bibinfo {volume} {452}},\
  \bibinfo {pages} {970} (\bibinfo {year} {2008})}\BibitemShut {NoStop}%
\bibitem [{\citenamefont {Meevasana}\ \emph {et~al.}(2011)\citenamefont
  {Meevasana}, \citenamefont {King}, \citenamefont {He}, \citenamefont {Mo},
  \citenamefont {Hashimoto}, \citenamefont {Tamai}, \citenamefont
  {Songsiriritthigul}, \citenamefont {Baumberger},\ and\ \citenamefont
  {Shen}}]{Meevasana2011}%
  \BibitemOpen
  \bibfield  {author} {\bibinfo {author} {\bibfnamefont {W.}~\bibnamefont
  {Meevasana}}, \bibinfo {author} {\bibfnamefont {P.~D.~C.}\ \bibnamefont
  {King}}, \bibinfo {author} {\bibfnamefont {R.~H.}\ \bibnamefont {He}},
  \bibinfo {author} {\bibfnamefont {S.-K.}\ \bibnamefont {Mo}}, \bibinfo
  {author} {\bibfnamefont {M.}~\bibnamefont {Hashimoto}}, \bibinfo {author}
  {\bibfnamefont {A.}~\bibnamefont {Tamai}}, \bibinfo {author} {\bibfnamefont
  {P.}~\bibnamefont {Songsiriritthigul}}, \bibinfo {author} {\bibfnamefont
  {F.}~\bibnamefont {Baumberger}},\ and\ \bibinfo {author} {\bibfnamefont
  {Z.-X.}\ \bibnamefont {Shen}},\ }\bibfield  {title} {\bibinfo {title}
  {Creation and control of a two-dimensional electron liquid at the bare
  {SrTiO$_3$} surface.},\ }\href@noop {} {\bibfield  {journal} {\bibinfo
  {journal} {Nat. Mater.}\ }\textbf {\bibinfo {volume} {10}},\ \bibinfo {pages}
  {114} (\bibinfo {year} {2011})}\BibitemShut {NoStop}%
\bibitem [{\citenamefont {Santander-Syro}\ \emph {et~al.}(2011)\citenamefont
  {Santander-Syro}, \citenamefont {Copie}, \citenamefont {Kondo}, \citenamefont
  {Fortuna}, \citenamefont {Pailh{\`{e}}s}, \citenamefont {Weht}, \citenamefont
  {Qiu}, \citenamefont {Bertran}, \citenamefont {Nicolaou}, \citenamefont
  {Taleb-Ibrahimi}, \citenamefont {{Le F{\`{e}}vre}}, \citenamefont {Herranz},
  \citenamefont {Bibes}, \citenamefont {Reyren}, \citenamefont {Apertet},
  \citenamefont {Lecoeur}, \citenamefont {Barth{\'{e}}l{\'{e}}my},\ and\
  \citenamefont {Rozenberg}}]{Santander-Syro2011}%
  \BibitemOpen
  \bibfield  {author} {\bibinfo {author} {\bibfnamefont {A.~F.}\ \bibnamefont
  {Santander-Syro}}, \bibinfo {author} {\bibfnamefont {O.}~\bibnamefont
  {Copie}}, \bibinfo {author} {\bibfnamefont {T.}~\bibnamefont {Kondo}},
  \bibinfo {author} {\bibfnamefont {F.}~\bibnamefont {Fortuna}}, \bibinfo
  {author} {\bibfnamefont {S.}~\bibnamefont {Pailh{\`{e}}s}}, \bibinfo {author}
  {\bibfnamefont {R.}~\bibnamefont {Weht}}, \bibinfo {author} {\bibfnamefont
  {X.~G.}\ \bibnamefont {Qiu}}, \bibinfo {author} {\bibfnamefont
  {F.}~\bibnamefont {Bertran}}, \bibinfo {author} {\bibfnamefont
  {A.}~\bibnamefont {Nicolaou}}, \bibinfo {author} {\bibfnamefont
  {A.}~\bibnamefont {Taleb-Ibrahimi}}, \bibinfo {author} {\bibfnamefont
  {P.}~\bibnamefont {{Le F{\`{e}}vre}}}, \bibinfo {author} {\bibfnamefont
  {G.}~\bibnamefont {Herranz}}, \bibinfo {author} {\bibfnamefont
  {M.}~\bibnamefont {Bibes}}, \bibinfo {author} {\bibfnamefont
  {N.}~\bibnamefont {Reyren}}, \bibinfo {author} {\bibfnamefont
  {Y.}~\bibnamefont {Apertet}}, \bibinfo {author} {\bibfnamefont
  {P.}~\bibnamefont {Lecoeur}}, \bibinfo {author} {\bibfnamefont
  {A.}~\bibnamefont {Barth{\'{e}}l{\'{e}}my}},\ and\ \bibinfo {author}
  {\bibfnamefont {M.~J.}\ \bibnamefont {Rozenberg}},\ }\bibfield  {title}
  {\bibinfo {title} {Two-dimensional electron gas with universal subbands at
  the surface of {SrTiO$_3$}},\ }\href@noop {} {\bibfield  {journal} {\bibinfo
  {journal} {Nature}\ }\textbf {\bibinfo {volume} {469}},\ \bibinfo {pages}
  {189} (\bibinfo {year} {2011})}\BibitemShut {NoStop}%
\bibitem [{\citenamefont {Moser}\ \emph {et~al.}(2013)\citenamefont {Moser},
  \citenamefont {Moreschini}, \citenamefont {Ja{\'{c}}imovi{\'{c}}},
  \citenamefont {Bari{\v{s}}i{\'{c}}}, \citenamefont {Berger}, \citenamefont
  {Magrez}, \citenamefont {Chang}, \citenamefont {Kim}, \citenamefont
  {Bostwick}, \citenamefont {Rotenberg}, \citenamefont {Forr{\'{o}}},\ and\
  \citenamefont {Grioni}}]{Moser2013}%
  \BibitemOpen
  \bibfield  {author} {\bibinfo {author} {\bibfnamefont {S.}~\bibnamefont
  {Moser}}, \bibinfo {author} {\bibfnamefont {L.}~\bibnamefont {Moreschini}},
  \bibinfo {author} {\bibfnamefont {J.}~\bibnamefont {Ja{\'{c}}imovi{\'{c}}}},
  \bibinfo {author} {\bibfnamefont {O.~S.}\ \bibnamefont
  {Bari{\v{s}}i{\'{c}}}}, \bibinfo {author} {\bibfnamefont {H.}~\bibnamefont
  {Berger}}, \bibinfo {author} {\bibfnamefont {A.}~\bibnamefont {Magrez}},
  \bibinfo {author} {\bibfnamefont {Y.~J.}\ \bibnamefont {Chang}}, \bibinfo
  {author} {\bibfnamefont {K.~S.}\ \bibnamefont {Kim}}, \bibinfo {author}
  {\bibfnamefont {A.}~\bibnamefont {Bostwick}}, \bibinfo {author}
  {\bibfnamefont {E.}~\bibnamefont {Rotenberg}}, \bibinfo {author}
  {\bibfnamefont {L.}~\bibnamefont {Forr{\'{o}}}},\ and\ \bibinfo {author}
  {\bibfnamefont {M.}~\bibnamefont {Grioni}},\ }\bibfield  {title} {\bibinfo
  {title} {Tunable polaronic conduction in anatase {TiO$_2$}},\ }\href@noop {}
  {\bibfield  {journal} {\bibinfo  {journal} {Phys. Rev. Lett.}\ }\textbf
  {\bibinfo {volume} {110}},\ \bibinfo {pages} {196403} (\bibinfo {year}
  {2013})}\BibitemShut {NoStop}%
\bibitem [{\citenamefont {Zhang}\ \emph {et~al.}(2018)\citenamefont {Zhang},
  \citenamefont {Yaji}, \citenamefont {Hashimoto}, \citenamefont {Ota},
  \citenamefont {Kondo}, \citenamefont {Okazaki}, \citenamefont {Wang},
  \citenamefont {Wen}, \citenamefont {Gu}, \citenamefont {Ding},\ and\
  \citenamefont {Shin}}]{Zhang2018}%
  \BibitemOpen
  \bibfield  {author} {\bibinfo {author} {\bibfnamefont {P.}~\bibnamefont
  {Zhang}}, \bibinfo {author} {\bibfnamefont {K.}~\bibnamefont {Yaji}},
  \bibinfo {author} {\bibfnamefont {T.}~\bibnamefont {Hashimoto}}, \bibinfo
  {author} {\bibfnamefont {Y.}~\bibnamefont {Ota}}, \bibinfo {author}
  {\bibfnamefont {T.}~\bibnamefont {Kondo}}, \bibinfo {author} {\bibfnamefont
  {K.}~\bibnamefont {Okazaki}}, \bibinfo {author} {\bibfnamefont
  {Z.}~\bibnamefont {Wang}}, \bibinfo {author} {\bibfnamefont {J.}~\bibnamefont
  {Wen}}, \bibinfo {author} {\bibfnamefont {G.~D.}\ \bibnamefont {Gu}},
  \bibinfo {author} {\bibfnamefont {H.}~\bibnamefont {Ding}},\ and\ \bibinfo
  {author} {\bibfnamefont {S.}~\bibnamefont {Shin}},\ }\bibfield  {title}
  {\bibinfo {title} {Observation of topological superconductivity on the
  surface of an iron-based superconductor},\ }\href@noop {} {\bibfield
  {journal} {\bibinfo  {journal} {Science}\ }\textbf {\bibinfo {volume}
  {360}},\ \bibinfo {pages} {182} (\bibinfo {year} {2018})}\BibitemShut
  {NoStop}%
\bibitem [{\citenamefont {Wenner}(1916)}]{Wenner1916}%
  \BibitemOpen
  \bibfield  {author} {\bibinfo {author} {\bibfnamefont {F.}~\bibnamefont
  {Wenner}},\ }\bibfield  {title} {\bibinfo {title} {A method of measuring
  earth resistivity},\ }\href@noop {} {\bibfield  {journal} {\bibinfo
  {journal} {Bull. Bur. Stand.}\ }\textbf {\bibinfo {volume} {12}},\ \bibinfo
  {pages} {469} (\bibinfo {year} {1916})}\BibitemShut {NoStop}%
\bibitem [{\citenamefont {Valdes}(1954)}]{Valdes1954}%
  \BibitemOpen
  \bibfield  {author} {\bibinfo {author} {\bibfnamefont {L.}~\bibnamefont
  {Valdes}},\ }\bibfield  {title} {\bibinfo {title} {Resistivity measurements
  on germanium for transistors},\ }\href@noop {} {\bibfield  {journal}
  {\bibinfo  {journal} {Proc. IRE}\ }\textbf {\bibinfo {volume} {42}},\
  \bibinfo {pages} {420} (\bibinfo {year} {1954})}\BibitemShut {NoStop}%
\bibitem [{\citenamefont {Shiraki}\ \emph {et~al.}(2001)\citenamefont
  {Shiraki}, \citenamefont {Tanabe}, \citenamefont {Hobara}, \citenamefont
  {Nagao},\ and\ \citenamefont {Hasegawa}}]{Shiraki2001}%
  \BibitemOpen
  \bibfield  {author} {\bibinfo {author} {\bibfnamefont {I.}~\bibnamefont
  {Shiraki}}, \bibinfo {author} {\bibfnamefont {F.}~\bibnamefont {Tanabe}},
  \bibinfo {author} {\bibfnamefont {R.}~\bibnamefont {Hobara}}, \bibinfo
  {author} {\bibfnamefont {T.}~\bibnamefont {Nagao}},\ and\ \bibinfo {author}
  {\bibfnamefont {S.}~\bibnamefont {Hasegawa}},\ }\bibfield  {title} {\bibinfo
  {title} {Independently driven four-tip probes for conductivity measurements
  in ultrahigh vacuum},\ }\href@noop {} {\bibfield  {journal} {\bibinfo
  {journal} {Surf. Sci.}\ }\textbf {\bibinfo {volume} {493}},\ \bibinfo {pages}
  {633} (\bibinfo {year} {2001})}\BibitemShut {NoStop}%
\bibitem [{\citenamefont {Miccoli}\ \emph {et~al.}(2015)\citenamefont
  {Miccoli}, \citenamefont {Edler}, \citenamefont {Pfn{\"{u}}r},\ and\
  \citenamefont {Tegenkamp}}]{Miccoli2015}%
  \BibitemOpen
  \bibfield  {author} {\bibinfo {author} {\bibfnamefont {I.}~\bibnamefont
  {Miccoli}}, \bibinfo {author} {\bibfnamefont {F.}~\bibnamefont {Edler}},
  \bibinfo {author} {\bibfnamefont {H.}~\bibnamefont {Pfn{\"{u}}r}},\ and\
  \bibinfo {author} {\bibfnamefont {C.}~\bibnamefont {Tegenkamp}},\ }\bibfield
  {title} {\bibinfo {title} {The 100th anniversary of the four-point probe
  technique: the role of probe geometries in isotropic and anisotropic
  systems},\ }\href@noop {} {\bibfield  {journal} {\bibinfo  {journal} {J.
  Phys. Condens. Matter}\ }\textbf {\bibinfo {volume} {27}},\ \bibinfo {pages}
  {223201} (\bibinfo {year} {2015})}\BibitemShut {NoStop}%
\bibitem [{\citenamefont {Durand}\ \emph {et~al.}(2016)\citenamefont {Durand},
  \citenamefont {Zhang}, \citenamefont {Hus}, \citenamefont {Ma}, \citenamefont
  {McGuire}, \citenamefont {Xu}, \citenamefont {Cao}, \citenamefont
  {Miotkowski}, \citenamefont {Chen},\ and\ \citenamefont {Li}}]{Durand2016}%
  \BibitemOpen
  \bibfield  {author} {\bibinfo {author} {\bibfnamefont {C.}~\bibnamefont
  {Durand}}, \bibinfo {author} {\bibfnamefont {X.~G.}\ \bibnamefont {Zhang}},
  \bibinfo {author} {\bibfnamefont {S.~M.}\ \bibnamefont {Hus}}, \bibinfo
  {author} {\bibfnamefont {C.}~\bibnamefont {Ma}}, \bibinfo {author}
  {\bibfnamefont {M.~A.}\ \bibnamefont {McGuire}}, \bibinfo {author}
  {\bibfnamefont {Y.}~\bibnamefont {Xu}}, \bibinfo {author} {\bibfnamefont
  {H.}~\bibnamefont {Cao}}, \bibinfo {author} {\bibfnamefont {I.}~\bibnamefont
  {Miotkowski}}, \bibinfo {author} {\bibfnamefont {Y.~P.}\ \bibnamefont
  {Chen}},\ and\ \bibinfo {author} {\bibfnamefont {A.~P.}\ \bibnamefont {Li}},\
  }\bibfield  {title} {\bibinfo {title} {Differentiation of surface and bulk
  conductivities in topological insulators via four-probe spectroscopy},\
  }\href@noop {} {\bibfield  {journal} {\bibinfo  {journal} {Nano Lett.}\
  }\textbf {\bibinfo {volume} {16}},\ \bibinfo {pages} {2213} (\bibinfo {year}
  {2016})}\BibitemShut {NoStop}%
\bibitem [{\citenamefont {Ko}\ \emph {et~al.}(2018)\citenamefont {Ko},
  \citenamefont {Nguyen}, \citenamefont {Kim}, \citenamefont {Kim},
  \citenamefont {Zhang},\ and\ \citenamefont {Li}}]{Ko2018}%
  \BibitemOpen
  \bibfield  {author} {\bibinfo {author} {\bibfnamefont {W.}~\bibnamefont
  {Ko}}, \bibinfo {author} {\bibfnamefont {G.~D.}\ \bibnamefont {Nguyen}},
  \bibinfo {author} {\bibfnamefont {H.}~\bibnamefont {Kim}}, \bibinfo {author}
  {\bibfnamefont {J.~S.}\ \bibnamefont {Kim}}, \bibinfo {author} {\bibfnamefont
  {X.-G.}\ \bibnamefont {Zhang}},\ and\ \bibinfo {author} {\bibfnamefont
  {A.-P.}\ \bibnamefont {Li}},\ }\bibfield  {title} {\bibinfo {title}
  {Accessing the intrinsic spin transport in a topological insulator by
  controlling the crossover of bulk-to-surface conductance},\ }\href@noop {}
  {\bibfield  {journal} {\bibinfo  {journal} {Phys. Rev. Lett.}\ }\textbf
  {\bibinfo {volume} {121}},\ \bibinfo {pages} {176801} (\bibinfo {year}
  {2018})}\BibitemShut {NoStop}%
\bibitem [{\citenamefont {Kanagawa}\ \emph {et~al.}(2003)\citenamefont
  {Kanagawa}, \citenamefont {Hobara}, \citenamefont {Matsuda}, \citenamefont
  {Tanikawa}, \citenamefont {Natori},\ and\ \citenamefont
  {Hasegawa}}]{Kanagawa2003}%
  \BibitemOpen
  \bibfield  {author} {\bibinfo {author} {\bibfnamefont {T.}~\bibnamefont
  {Kanagawa}}, \bibinfo {author} {\bibfnamefont {R.}~\bibnamefont {Hobara}},
  \bibinfo {author} {\bibfnamefont {I.}~\bibnamefont {Matsuda}}, \bibinfo
  {author} {\bibfnamefont {T.}~\bibnamefont {Tanikawa}}, \bibinfo {author}
  {\bibfnamefont {A.}~\bibnamefont {Natori}},\ and\ \bibinfo {author}
  {\bibfnamefont {S.}~\bibnamefont {Hasegawa}},\ }\bibfield  {title} {\bibinfo
  {title} {Anisotropy in conductance of a quasi-one-dimensional metallic
  surface state measured by a square micro-four-point probe method},\
  }\href@noop {} {\bibfield  {journal} {\bibinfo  {journal} {Phys. Rev. Lett.}\
  }\textbf {\bibinfo {volume} {91}},\ \bibinfo {pages} {036805} (\bibinfo
  {year} {2003})}\BibitemShut {NoStop}%
\bibitem [{\citenamefont {Uetake}\ \emph {et~al.}()\citenamefont {Uetake},
  \citenamefont {Hirahara}, \citenamefont {Ueda}, \citenamefont {Nagamura},
  \citenamefont {Hobara},\ and\ \citenamefont {Hasegawa}}]{Uetake2012}%
  \BibitemOpen
  \bibfield  {author} {\bibinfo {author} {\bibfnamefont {T.}~\bibnamefont
  {Uetake}}, \bibinfo {author} {\bibfnamefont {T.}~\bibnamefont {Hirahara}},
  \bibinfo {author} {\bibfnamefont {Y.}~\bibnamefont {Ueda}}, \bibinfo {author}
  {\bibfnamefont {N.}~\bibnamefont {Nagamura}}, \bibinfo {author}
  {\bibfnamefont {R.}~\bibnamefont {Hobara}},\ and\ \bibinfo {author}
  {\bibfnamefont {S.}~\bibnamefont {Hasegawa}},\ }\bibfield  {title} {\bibinfo
  {title} {Anisotropic conductivity of the {Si(111)$4 \times 1$-In} surface:
  Transport mechanism determined by the temperature dependence},\ }\href@noop
  {} {\bibinfo  {journal} {Phys. Rev. B}\ ,\ \bibinfo {pages}
  {035325}}\BibitemShut {NoStop}%
\bibitem [{\citenamefont {Edler}\ \emph {et~al.}(2015)\citenamefont {Edler},
  \citenamefont {Miccoli}, \citenamefont {Demuth}, \citenamefont {Pfn{\"{u}}r},
  \citenamefont {Wippermann}, \citenamefont {L{\"{u}}cke}, \citenamefont
  {Schmidt},\ and\ \citenamefont {Tegenkamp}}]{Edler2015}%
  \BibitemOpen
\bibfield  {journal} {  }\bibfield  {author} {\bibinfo {author} {\bibfnamefont
  {F.}~\bibnamefont {Edler}}, \bibinfo {author} {\bibfnamefont
  {I.}~\bibnamefont {Miccoli}}, \bibinfo {author} {\bibfnamefont
  {S.}~\bibnamefont {Demuth}}, \bibinfo {author} {\bibfnamefont
  {H.}~\bibnamefont {Pfn{\"{u}}r}}, \bibinfo {author} {\bibfnamefont
  {S.}~\bibnamefont {Wippermann}}, \bibinfo {author} {\bibfnamefont
  {A.}~\bibnamefont {L{\"{u}}cke}}, \bibinfo {author} {\bibfnamefont {W.~G.}\
  \bibnamefont {Schmidt}},\ and\ \bibinfo {author} {\bibfnamefont
  {C.}~\bibnamefont {Tegenkamp}},\ }\bibfield  {title} {\bibinfo {title}
  {Interwire coupling for {In($4 \times 1$) /Si(111)} probed by surface
  transport},\ }\href@noop {} {\bibfield  {journal} {\bibinfo  {journal} {Phys.
  Rev. B}\ }\textbf {\bibinfo {volume} {92}},\ \bibinfo {pages} {085426}
  (\bibinfo {year} {2015})}\BibitemShut {NoStop}%
\bibitem [{\citenamefont {Ma}\ \emph {et~al.}(2017)\citenamefont {Ma},
  \citenamefont {Huan}, \citenamefont {Wu}, \citenamefont {Yan}, \citenamefont
  {Zou}, \citenamefont {Wang}, \citenamefont {Bobisch}, \citenamefont {Bao},\
  and\ \citenamefont {Gao}}]{Ma2017}%
  \BibitemOpen
  \bibfield  {author} {\bibinfo {author} {\bibfnamefont {R.}~\bibnamefont
  {Ma}}, \bibinfo {author} {\bibfnamefont {Q.}~\bibnamefont {Huan}}, \bibinfo
  {author} {\bibfnamefont {L.}~\bibnamefont {Wu}}, \bibinfo {author}
  {\bibfnamefont {J.}~\bibnamefont {Yan}}, \bibinfo {author} {\bibfnamefont
  {Q.}~\bibnamefont {Zou}}, \bibinfo {author} {\bibfnamefont {A.}~\bibnamefont
  {Wang}}, \bibinfo {author} {\bibfnamefont {C.~A.}\ \bibnamefont {Bobisch}},
  \bibinfo {author} {\bibfnamefont {L.}~\bibnamefont {Bao}},\ and\ \bibinfo
  {author} {\bibfnamefont {H.-J.}\ \bibnamefont {Gao}},\ }\bibfield  {title}
  {\bibinfo {title} {Upgrade of a commercial four-probe scanning tunneling
  microscopy system},\ }\href@noop {} {\bibfield  {journal} {\bibinfo
  {journal} {Rev. Sci. Instrum.}\ }\textbf {\bibinfo {volume} {88}},\ \bibinfo
  {pages} {063704} (\bibinfo {year} {2017})}\BibitemShut {NoStop}%
\bibitem [{\citenamefont {Wang}\ \emph {et~al.}(2014)\citenamefont {Wang},
  \citenamefont {Shao}, \citenamefont {Xie}, \citenamefont {Lyu}, \citenamefont
  {Liu}, \citenamefont {Gao},\ and\ \citenamefont {Huang}}]{Wang2014}%
  \BibitemOpen
  \bibfield  {author} {\bibinfo {author} {\bibfnamefont {Q.}~\bibnamefont
  {Wang}}, \bibinfo {author} {\bibfnamefont {Y.}~\bibnamefont {Shao}}, \bibinfo
  {author} {\bibfnamefont {H.}~\bibnamefont {Xie}}, \bibinfo {author}
  {\bibfnamefont {L.}~\bibnamefont {Lyu}}, \bibinfo {author} {\bibfnamefont
  {X.}~\bibnamefont {Liu}}, \bibinfo {author} {\bibfnamefont {Y.}~\bibnamefont
  {Gao}},\ and\ \bibinfo {author} {\bibfnamefont {J.}~\bibnamefont {Huang}},\
  }\bibfield  {title} {\bibinfo {title} {Qualifying composition dependent p and
  n self-doping in {CH$_3$NH$_3$PbI$_3$}},\ }\href@noop {} {\bibfield
  {journal} {\bibinfo  {journal} {Appl. Phys. Lett.}\ }\textbf {\bibinfo
  {volume} {105}},\ \bibinfo {pages} {163508} (\bibinfo {year}
  {2014})}\BibitemShut {NoStop}%
\bibitem [{\citenamefont {Kim}\ \emph {et~al.}(2007)\citenamefont {Kim},
  \citenamefont {Wang}, \citenamefont {Wendelken}, \citenamefont {Weitering},
  \citenamefont {Li},\ and\ \citenamefont {Li}}]{Kim2007}%
  \BibitemOpen
  \bibfield  {author} {\bibinfo {author} {\bibfnamefont {T.-H.}\ \bibnamefont
  {Kim}}, \bibinfo {author} {\bibfnamefont {Z.}~\bibnamefont {Wang}}, \bibinfo
  {author} {\bibfnamefont {J.~F.}\ \bibnamefont {Wendelken}}, \bibinfo {author}
  {\bibfnamefont {H.~H.}\ \bibnamefont {Weitering}}, \bibinfo {author}
  {\bibfnamefont {W.}~\bibnamefont {Li}},\ and\ \bibinfo {author}
  {\bibfnamefont {A.-P.}\ \bibnamefont {Li}},\ }\bibfield  {title} {\bibinfo
  {title} {A cryogenic quatraprobe scanning tunneling microscope system with
  fabrication capability for nanotransport research},\ }\href@noop {}
  {\bibfield  {journal} {\bibinfo  {journal} {Rev. Sci. Instrum.}\ }\textbf
  {\bibinfo {volume} {78}},\ \bibinfo {pages} {123701} (\bibinfo {year}
  {2007})}\BibitemShut {NoStop}%
\bibitem [{\citenamefont {Vaughan}(1961)}]{Vaughan1961}%
  \BibitemOpen
  \bibfield  {author} {\bibinfo {author} {\bibfnamefont {D.~E.}\ \bibnamefont
  {Vaughan}},\ }\bibfield  {title} {\bibinfo {title} {Four-probe resistivity
  measurements on small circular specimens},\ }\href@noop {} {\bibfield
  {journal} {\bibinfo  {journal} {Br. J. Appl. Phys.}\ }\textbf {\bibinfo
  {volume} {12}},\ \bibinfo {pages} {414} (\bibinfo {year} {1961})}\BibitemShut
  {NoStop}%
\bibitem [{\citenamefont {Albers}\ and\ \citenamefont
  {Berkowitz}(1985)}]{Albers1985}%
  \BibitemOpen
  \bibfield  {author} {\bibinfo {author} {\bibfnamefont {J.}~\bibnamefont
  {Albers}}\ and\ \bibinfo {author} {\bibfnamefont {H.~L.}\ \bibnamefont
  {Berkowitz}},\ }\bibfield  {title} {\bibinfo {title} {An alternative approach
  to the calculation of four‐probe resistances on nonuniform structures},\
  }\href@noop {} {\bibfield  {journal} {\bibinfo  {journal} {J. Electrochem.
  Soc.}\ }\textbf {\bibinfo {volume} {132}},\ \bibinfo {pages} {2453} (\bibinfo
  {year} {1985})}\BibitemShut {NoStop}%
\bibitem [{Note1()}]{Note1}%
  \BibitemOpen
  \bibinfo {note} {The symbol $\si {\ohm }/\square $ denotes the unit of sheet
  resistance.}\BibitemShut {Stop}%
\bibitem [{\citenamefont {Ho}\ and\ \citenamefont {Chu}(1977)}]{Ho}%
  \BibitemOpen
  \bibfield  {author} {\bibinfo {author} {\bibfnamefont {C.~Y.}\ \bibnamefont
  {Ho}}\ and\ \bibinfo {author} {\bibfnamefont {T.~K.}\ \bibnamefont {Chu}},\
  }\href@noop {} {\emph {\bibinfo {title} {Electrical Resistivity and Thermal
  Conductivity of Nine Selected AISI stainless Steels No. DTIC A129160}}}\
  (\bibinfo  {publisher} {CINDAS Report No. 45},\ \bibinfo {year}
  {1977})\BibitemShut {NoStop}%
\bibitem [{\citenamefont {Kjaer}\ \emph {et~al.}(2008)\citenamefont {Kjaer},
  \citenamefont {Lin}, \citenamefont {Petersen}, \citenamefont {Kopalidis},
  \citenamefont {Eddy}, \citenamefont {Walker}, \citenamefont {Egelhoff},
  \citenamefont {Pickert}, \citenamefont {Seebauer}, \citenamefont {Felch},
  \citenamefont {Jain},\ and\ \citenamefont {Kondratenko}}]{Kjaer2008}%
  \BibitemOpen
  \bibfield  {author} {\bibinfo {author} {\bibfnamefont {D.}~\bibnamefont
  {Kjaer}}, \bibinfo {author} {\bibfnamefont {R.}~\bibnamefont {Lin}}, \bibinfo
  {author} {\bibfnamefont {D.~H.}\ \bibnamefont {Petersen}}, \bibinfo {author}
  {\bibfnamefont {P.~M.}\ \bibnamefont {Kopalidis}}, \bibinfo {author}
  {\bibfnamefont {R.}~\bibnamefont {Eddy}}, \bibinfo {author} {\bibfnamefont
  {D.~A.}\ \bibnamefont {Walker}}, \bibinfo {author} {\bibfnamefont {W.~F.}\
  \bibnamefont {Egelhoff}}, \bibinfo {author} {\bibfnamefont {L.}~\bibnamefont
  {Pickert}}, \bibinfo {author} {\bibfnamefont {E.~G.}\ \bibnamefont
  {Seebauer}}, \bibinfo {author} {\bibfnamefont {S.~B.}\ \bibnamefont {Felch}},
  \bibinfo {author} {\bibfnamefont {A.}~\bibnamefont {Jain}},\ and\ \bibinfo
  {author} {\bibfnamefont {Y.~V.}\ \bibnamefont {Kondratenko}},\ }\bibfield
  {title} {\bibinfo {title} {Micro four-point probe with high spatial
  resolution for ion implantation and ultra-shallow junction
  characterization},\ }in\ \href@noop {} {\emph {\bibinfo {booktitle} {AIP
  Conf. Proc.}}},\ Vol.\ \bibinfo {volume} {1066}\ (\bibinfo  {publisher}
  {AIP},\ \bibinfo {year} {2008})\ pp.\ \bibinfo {pages} {167--170}\BibitemShut
  {NoStop}%
\bibitem [{\citenamefont {Yamada}\ \emph {et~al.}(2012)\citenamefont {Yamada},
  \citenamefont {Hirahara}, \citenamefont {Hobara}, \citenamefont {Hasegawa},
  \citenamefont {Mizuno}, \citenamefont {Miyatake},\ and\ \citenamefont
  {Nagamura}}]{Yamada2012}%
  \BibitemOpen
  \bibfield  {author} {\bibinfo {author} {\bibfnamefont {M.}~\bibnamefont
  {Yamada}}, \bibinfo {author} {\bibfnamefont {T.}~\bibnamefont {Hirahara}},
  \bibinfo {author} {\bibfnamefont {R.}~\bibnamefont {Hobara}}, \bibinfo
  {author} {\bibfnamefont {S.}~\bibnamefont {Hasegawa}}, \bibinfo {author}
  {\bibfnamefont {H.}~\bibnamefont {Mizuno}}, \bibinfo {author} {\bibfnamefont
  {Y.}~\bibnamefont {Miyatake}},\ and\ \bibinfo {author} {\bibfnamefont
  {T.}~\bibnamefont {Nagamura}},\ }\bibfield  {title} {\bibinfo {title}
  {Surface electrical conductivity measurement system with micro-four-point
  probes at sub-kelvin temperature under high magnetic field in ultrahigh
  vacuum},\ }\href@noop {} {\bibfield  {journal} {\bibinfo  {journal}
  {e-Journal Surf. Sci. Nanotechnol.}\ }\textbf {\bibinfo {volume} {10}},\
  \bibinfo {pages} {400} (\bibinfo {year} {2012})}\BibitemShut {NoStop}%
\bibitem [{\citenamefont {Tanikawa}\ \emph {et~al.}(2003)\citenamefont
  {Tanikawa}, \citenamefont {Matsuda}, \citenamefont {Hobara},\ and\
  \citenamefont {Hasegawa}}]{Tanikawa2003}%
  \BibitemOpen
  \bibfield  {author} {\bibinfo {author} {\bibfnamefont {T.}~\bibnamefont
  {Tanikawa}}, \bibinfo {author} {\bibfnamefont {I.}~\bibnamefont {Matsuda}},
  \bibinfo {author} {\bibfnamefont {R.}~\bibnamefont {Hobara}},\ and\ \bibinfo
  {author} {\bibfnamefont {S.}~\bibnamefont {Hasegawa}},\ }\bibfield  {title}
  {\bibinfo {title} {Variable-temperature micro-four-point probe method for
  surface electrical onductivity measurements in ultrahigh vacuum},\
  }\href@noop {} {\bibfield  {journal} {\bibinfo  {journal} {e-Journal Surf.
  Sci. Nanotechnol.}\ }\textbf {\bibinfo {volume} {1}},\ \bibinfo {pages} {50}
  (\bibinfo {year} {2003})}\BibitemShut {NoStop}%
\bibitem [{\citenamefont {Liu}\ \emph {et~al.}(2022)\citenamefont {Liu},
  \citenamefont {Li}, \citenamefont {Ruan}, \citenamefont {Rahn}, \citenamefont
  {Yakobson},\ and\ \citenamefont {Hersam}}]{Liu2022}%
  \BibitemOpen
  \bibfield  {author} {\bibinfo {author} {\bibfnamefont {X.}~\bibnamefont
  {Liu}}, \bibinfo {author} {\bibfnamefont {Q.}~\bibnamefont {Li}}, \bibinfo
  {author} {\bibfnamefont {Q.}~\bibnamefont {Ruan}}, \bibinfo {author}
  {\bibfnamefont {M.~S.}\ \bibnamefont {Rahn}}, \bibinfo {author}
  {\bibfnamefont {B.~I.}\ \bibnamefont {Yakobson}},\ and\ \bibinfo {author}
  {\bibfnamefont {M.~C.}\ \bibnamefont {Hersam}},\ }\bibfield  {title}
  {\bibinfo {title} {Borophene synthesis beyond the single-atomic-layer
  limit},\ }\href@noop {} {\bibfield  {journal} {\bibinfo  {journal} {Nat.
  Mater.}\ }\textbf {\bibinfo {volume} {21}},\ \bibinfo {pages} {35} (\bibinfo
  {year} {2022})}\BibitemShut {NoStop}%
\bibitem [{\citenamefont {Cox}\ \emph {et~al.}(2005)\citenamefont {Cox},
  \citenamefont {Little},\ and\ \citenamefont {O'Shea}}]{Cox2005}%
  \BibitemOpen
  \bibfield  {author} {\bibinfo {author} {\bibfnamefont {D.~A.}\ \bibnamefont
  {Cox}}, \bibinfo {author} {\bibfnamefont {J.}~\bibnamefont {Little}},\ and\
  \bibinfo {author} {\bibfnamefont {D.}~\bibnamefont {O'Shea}},\ }\href@noop {}
  {\emph {\bibinfo {title} {Ideals, Varieties, and Algorithms}}}\ (\bibinfo
  {publisher} {Springer},\ \bibinfo {year} {2005})\BibitemShut {NoStop}%
\bibitem [{\citenamefont {Ahlfors}(1953)}]{Ahlfors1953}%
  \BibitemOpen
  \bibfield  {author} {\bibinfo {author} {\bibfnamefont {L.~V.}\ \bibnamefont
  {Ahlfors}},\ }\href@noop {} {\emph {\bibinfo {title} {Complex Analysis}}}\
  (\bibinfo  {publisher} {McGraw-Hill},\ \bibinfo {year} {1953})\BibitemShut
  {NoStop}%
\end{thebibliography}
%

\appendix

\section{Proof of Theorem~\ref{thm:indepMeas} \label{app:indepMeas}}
Any permutation can be expressed as a composition of transpositions, and therefore, it is sufficient to consider the transpositions: 
\begin{align}
    \tau_{12}(R^{ab}_{cd}) &= R^{ba}_{cd}, \tau_{34}(R^{ab}_{cd}) = R^{ab}_{dc}, \tau_{13}(R^{ab}_{cd}) = R^{cb}_{ad}, \\
    \tau_{24}(R^{ab}_{cd}) &= R^{ad}_{cb}, \tau_{14}(R^{ab}_{cd}) = R^{db}_{ca}, \tau_{23}(R^{ab}_{cd}) = R^{ac}_{bd}. 
\end{align}
From~\eqref{eq:R}, we obtain $\tau_{12} = \tau_{34}$ because
\begin{align}
    \tau_{12}(R^{ab}_{cd}) &= \frac{\rho_{\mathrm{s}}}{2 \pi}\ln\left( \frac{s_{ad}s_{bc}}{s_{ba}s_{cd}} \right) \\
    &= \frac{\rho_{\mathrm{s}}}{2 \pi}\ln\left( \frac{s_{da}s_{cb}}{s_{ab}s_{dc}} \right) = \tau_{34}(R^{ab}_{cd}). 
\end{align}
Similarly, $\tau_{13} = \tau_{24}$ and $\tau_{14} = \tau_{23}$ can also be obtained. Hence, the three transpositions $\tau_{12}$, $\tau_{13}$, and $\tau_{14}$ are sufficient to consider all permutations.

The transposition $\tau_{14}$ corresponds to the swap of $I^+$ and $I^-$ probes, and hence, it only reverses the sign of the resistance, i.e., $\tau_{14} = -e$, where $e$ denotes the identity mapping $e(R^{ab}_{cd}) = R^{ab}_{cd}$.
Definition~\eqref{eq:R} implies that 
\begin{align}
    &\tau_{12}(R^{ab}_{cd}) + \tau_{13}(R^{ab}_{cd}) \\
    =& \frac{\rho_{\mathrm{s}}}{2 \pi}\left\{\ln \left( \frac{s_{ad}s_{bc}}{s_{ba}s_{cd}}\right) + \ln \left( \frac{s_{bd}s_{ca}}{s_{cb}s_{ad}}\right) \right\} \\
    =& \frac{\rho_{\mathrm{s}}}{2 \pi} \ln \left( \frac{s_{bd}s_{ca}}{s_{ba}s_{cd}}\right) = R^{ab}_{cd},
\end{align}
which implies that $\tau_{12} + \tau_{13} = e$ holds.

Consequently, all permutations can be expressed by pairs of the mappings $e$ and $\tau_{12}$ or $e$ and $\tau_{13}$, indicating that all resistance measurements can be computed from $R^{ab}_{cd}$ and $R^{ba}_{cd}\ ( = \tau_{12}(R^{ab}_{cd}))$, or $R^{ab}_{cd}$ and $R^{cb}_{ad}\ ( = \tau_{13}(R^{ab}_{cd}))$.

\section{Proofs of Lemmas~\ref{thm:3probes}, \ref{thm:4probes}, and \ref{thm:5probes} \label{app:proofs}}

In this appendix, we use the same notations as in Section~\ref{sec:Math}.
Before starting the proofs, some additional symbols and relations are introduced to simplify the proofs.
By defining $l_{ab} \coloneqq s_{ab}^2$ and $D^{ab}_{cd} \coloneqq (C^{ab}_{cd})^2$,~\eqref{eq:R2} and~\eqref{eq:cos_rule} can be rewritten as
\begin{equation}
    D^{ab}_{cd}(\rho_{\mathrm{s}}) l_{ab}l_{cd} = l_{bd}l_{ac}, \label{eq:R3}
\end{equation}
and
\begin{equation}
    l_{ab} + l_{ac} - l_{bc} = 2 \bm{v}_{ab}^\top \bm{v}_{ac}. \label{eq:cos_rule2}
\end{equation}
Let $\bm{q}_d \in \mathbb{R}^2$ be a vector defined as
\begin{equation}
    \bm{q}_{d} \coloneqq \begin{bmatrix}
        l_{ab} + l_{ad} - l_{bd} \\
        l_{ac} + l_{ad} - l_{cd}
    \end{bmatrix},
\end{equation}

and $V_{a,bc} \in \mathbb{R}^{2 \times 2}$ be a square matrix defined as 
\begin{equation}
    V_{a,bc} \coloneqq \begin{bmatrix}
        \bm{v}_{ab} & \bm{v}_{ac}
    \end{bmatrix}.
\end{equation}

From~\eqref{eq:cos_rule2}, $\bm{q}_d$ and $V_{a,bc}$ are related as 
\begin{equation}
    \frac{1}{2}\bm{q}_{d} = \begin{bmatrix}
        \bm{v}_{ab}^\top \bm{v}_{ad} \\
        \bm{v}_{ac}^\top \bm{v}_{ad}
    \end{bmatrix} = V_{a,bc}^\top \bm{v}_{ad}, \label{eq:Vbc_q}
\end{equation}
which is frequently used in the following proofs.

Moreover, by substituting~\eqref{eq:R3} into $\bm{q}_d$ and eliminating $l_{bd}$ and $l_{cd}$, $\bm{q}_d$ can be expressed as an affine function of $l_{ad}$: 
\begin{equation}
    \bm{q}_d = \bm{\alpha}_d(\rho_{\mathrm{s}}) l_{ad} + \bm{\beta}, \label{eq:qab}
\end{equation}
where $\bm{\alpha}_d \in \mathbb{R}^2$ and $\bm{\beta} \in \mathbb{R}$ are defined as
\begin{equation}
    \bm{\alpha}_d(\rho_{\mathrm{s}}) \coloneqq \begin{bmatrix}
        1 - \frac{D^{da}_{bc}(\rho_{\mathrm{s}})l_{bc}}{l_{ac}} \\
        1 - \frac{D^{da}_{cb}(\rho_{\mathrm{s}})l_{bc}}{l_{ab}}
    \end{bmatrix}, \quad \bm{\beta} \coloneqq \begin{bmatrix}
        l_{ab} \\
        l_{ac}
    \end{bmatrix}. 
\end{equation}

\subsection{Proof of Lemma~\ref{thm:3probes}}
Let $G_2$ be the $2 \times 2$ matrix defined as
\begin{equation}
    G_2 \coloneqq \begin{bmatrix}
        \bm{v}_{ab}^\top \bm{v}_{ab} & \bm{v}_{ab}^\top \bm{v}_{ac} \\
        \bm{v}_{ac}^\top \bm{v}_{ab} & \bm{v}_{ac}^\top \bm{v}_{ac}
    \end{bmatrix} = V_{a,bc}^\top V_{a,bc}.
\end{equation}
Since the probes $a$, $b$, and $c$ do not align on a straight line, the matrix $V_{a,bc}$ is invertible, and $G_2$ is positive definite.
Hence, we obtain the identity: 
\begin{align}
    0 &= \bm{v}_{ad}^\top \bm{v}_{ad} - \bm{v}_{ad}^\top \bm{v}_{ad}\\
    &= l_{ad} - \bm{v}_{ad}^\top \left(V_{a,bc}V_{a,bc}^{-1} \right) \left(	V_{a,bc}V_{a,bc}^{-1} \right)^\top \bm{v}_{ad} \\
    &= l_{ad} - \bm{v}_{ad}^\top V_{a,bc} \left\{V_{a,bc}^\top V_{a,bc}\right\}^{-1} V_{a,bc}^\top \bm{v}_{ad} \\
    &= l_{ad} - \bm{v}_{ad}^\top V_{a,bc} G_2^{-1} V_{a,bc}^\top \bm{v}_{ad}.
\end{align}
This identity combined with~\eqref{eq:Vbc_q} and~\eqref{eq:qab} provides
\begin{equation}
    l_{ad} - \frac{1}{4} \|\bm{q}_{d}\|^2_{G_2^{-1}} 
    = l_{ad} - \frac{1}{4} \left\| \bm{\alpha}_d l_{ad} + \bm{\beta}\right\|^2_{G_2^{-1}} = 0. \label{eq:tbz_sad}
\end{equation}
where $\|\bm{q}_{d}\|^2_{G_2^{-1}} \coloneqq \bm{q}_{d}^\top G_2^{-1} \bm{q}_{d}$. 
Equation~\eqref{eq:tbz_sad} has at most two positive solutions because it is quadratic with respect to $l_{ad}$.
The candidates for $l_{bd}$ and $l_{cd}$ can be computed from each positive solution of~\eqref{eq:tbz_sad} via~\eqref{eq:R3}.

For a triplet of positive values $(l_{ad}, l_{bd}, l_{cd})$, the position of probe $d$ is uniquely determined as the unique point $\bm{p}_d$ that satisfies $\|\bm{p}_d - \bm{p}_a\| = \sqrt{l_{ad}}$, $\|\bm{p}_d - \bm{p}_b\| = \sqrt{l_{bd}}$, and $\|\bm{p}_d - \bm{p}_c\| = \sqrt{l_{cd}}$. 
Hence, the number of candidates for the position of probe $d$ is at most two.

\subsection{Proof of Lemma~\ref{thm:4probes}}
\label{appsec:4probes}

Let $\bm{L} = [l_{ae}\ l_{be}\ l_{ce}\ l_{de}]^\top \in \mathbb{R}^4$ be the vector of unknown variables, i.e., the squared distances between probe $e$ and probes $a$, $b$, $c$, and $d$.
Equation~\eqref{eq:R3} for quadroplets $(e, a, b, c)$, $(e, a, c, d)$, and $(e, a, d, b)$ can be rewritten as
\begin{equation}
    A_r \bm{L} = 0, \label{eq:Ar}
\end{equation}
where $A_r \in \mathbb{R}^{3 \times 4}$ is a matrix defined as
\begin{equation}
    A_r := \begin{bmatrix}
        D^{ea}_{bc}l_{bc} & -l_{ac} & 0 & 0 \\
        D^{ea}_{cd}l_{cd} & 0 & -l_{ad} & 0 \\
        D^{ea}_{db}l_{bd} & 0 & 0 & -l_{ab}
    \end{bmatrix}. 
\end{equation}

It will be shown that \eqref{eq:Ar} has a unique solution under a certain condition. 
To this end, we first show that a component of $\bm{L}$ is uniquely determined by the other components from the geometric condition~\eqref{eq:cos_rule2}. 
From~\eqref{eq:Vbc_q} and the assumption that probes $a$, $b$, and $c$ do not align on a straight line, we obtain
\begin{equation}
\bm{v}_{ae} = \frac{1}{2} V_{a,bc}^{-\top}\bm{q}_{e},
\end{equation}
where $V_{a,bc}^{-\top} = (V_{a,bc}^\top)^{-1}$. 
By substituting this into~\eqref{eq:cos_rule2}, we obtain
\begin{equation}
    l_{ad} + l_{ae} - l_{de} = 2 \bm{v}_{ad}^\top \bm{v}_{ae} = \bm{v}_{ad}^\top V_{a,bc}^{-\top}\bm{q}_e. \label{eq:s34}
\end{equation}
The row vector $\bm{v}_{ad}^\top V_{a,bc}^{-\top}$ can be rewritten as
\begin{align}
    \bm{v}_{ad}^\top V_{a,bc}^{-\top} &= \bm{v}_{ad}^\top \begin{bmatrix}
        \bm{v}_{ac, y} &  -\bm{v}_{ab, y}\\ 
        -\bm{v}_{ac, x} & \bm{v}_{ab, x}
    \end{bmatrix}/|V_{a,bc}| \\
    &= -\frac{1}{|V_{a,bc}|} \begin{bmatrix}
        |V_{a,cd}| & |V_{a,db}|
    \end{bmatrix},
\end{align}
where subscripts $x$ and $y$ denote the $x$- and $y$-components of the vectors, respectively.
Direct calculations show that $|V_{a,bc}| + |V_{a,cd}| + |V_{a,db}| = |V_{b, cd}|$, which is nonzero as long as probes $b$, $c$, and $d$ do not align on a straight line.
Consequently,~\eqref{eq:s34} can be rearranged as
\begin{multline}
    l_{ae} = \frac{|V_{a,cd}|}{|V_{b, cd}|}l_{be} + \frac{|V_{a,db}|}{|V_{b, cd}|}l_{ce} 
    + \frac{|V_{a,bc}|}{|V_{b, cd}|}l_{de} \\
    - \frac{ l_{ab}|V_{a,cd}| + l_{ac}|V_{a,db}| + l_{ad}|V_{a,bc}|}{|V_{b, cd}|}. \label{eq:s04}
\end{multline}

From~\eqref{eq:s04}, $l_{ae}$ can be represented as a linear function of $l_{be}$, $l_{ce}$, and $l_{de}$.  
Hence, by letting $\tilde{\bm{L}} = [l_{be}\ l_{ce}\ l_{de}]^\top$, $\bm{L}$ can be expressed using $\tilde{\bm{L}}$ as
\begin{equation}
    \bm{L} = A_g\tilde{\bm{L}} + \bm{b}_g, \label{eq:Ag}
\end{equation}
where
\begin{align}
    A_g &\coloneqq \frac{1}{|V_{b, cd}|}\begin{bmatrix}
        |V_{a,cd}| & |V_{a,db}| & |V_{a,bc}| \\
        |V_{b, cd}| & 0 & 0 \\
        0 & |V_{b, cd}| & 0 \\
        0 & 0 & |V_{b, cd}| \\
    \end{bmatrix}, \\
    \bm{b}_g &\coloneqq - \frac{1}{|V_{b, cd}|}\begin{bmatrix}
        l_{ab}|V_{a,cd}| + l_{ac}|V_{a,db}| + l_{ad}|V_{a,bc}| \\ 0 \\ 0 \\ 0 
    \end{bmatrix}. 
\end{align}

By substituting~\eqref{eq:Ag} into~\eqref{eq:Ar}, we obtain a linear equation with respect to $\tilde{\bm{L}}$ as
\begin{equation}
    A_rA_g \tilde{\bm{L}} = -A_r \bm{b}_g, \label{eq:linsys}
\end{equation}
which has a unique solution if and only if
\begin{multline}
        |A_rA_g| = \Bigl( l_{ac}l_{ad}(D^{ea}_{bd}l_{bd} - l_{ab})|V_{a,bc}| \\
        + l_{ab}l_{ad}(D^{ea}_{bc}l_{bc} - l_{ac})|V_{a,cd}| \\
        + l_{ab}l_{ac}(D^{ea}_{cd}l_{cd} - l_{ad})|V_{a,db}|\Bigr)/|V_{b, cd}| \neq 0, \label{eq:4probes_condition}
\end{multline}
holds. 
Finally, if $\tilde{\bm{L}} = [l_{be}\ l_{ce}\ l_{de}]^\top$ is uniquely determined, the position $\bm{p}_e$ can be uniquely determined because probes $b$, $c$, and $d$ do not align on a straight line.

\subsection{Proof of Lemma~\ref{thm:5probes}}
To prove Lemma~\ref{thm:5probes}, we first derive a condition for $\rho_{\mathrm{s}}$ that ensures the existence of probe positions satisfying~\eqref{eq:R2}. 
This can be achieved by introducing the notion of the resultant of two polynomials, providing a condition for these polynomials to have common roots.
\begin{definition}[\cite{Cox2005}]
    For two indeterminates $(X, Y)$, let $f$ and $g$ be polynomials in $X$ and $Y$, that is, 
    \begin{align}
        f(X, Y) \coloneqq a_0 + a_1X + \cdots + a_nX^n, \\
        g(X, Y) \coloneqq b_0 + b_1X + \cdots + b_mX^m,
    \end{align}
    where $a_i\ (i=0,\dots,n)$ and $b_j\ (j=0,\dots,m)$ are polynoimals in $Y$ with real coefficients.
    Then, the resultant of $f$ and $g$ with respect to $X$ is defined as the determinant of the Sylvester matrix: 
    \begin{align}
        \mathrm{Res}(f, g, X) \coloneqq &\left| 
        \begin{bmatrix}
        a_0 & 0 & \cdots & 0 & b_0 & 0 & \cdots & 0 \\
        a_1 & a_0 & \cdots & 0 & b_1 & b_0 & \cdots & 0 \\
        \vdots & a_1 & \ddots & \vdots & \vdots & b_1 & \ddots & \vdots \\
        a_n & \vdots & \ddots & a_0 & b_m & \vdots & \ddots & b_0 \\
        0 & a_n & \cdots & a_1 & 0 & b_m & \cdots & b_1 \\
        \vdots & \vdots & \ddots & \vdots & \vdots & \vdots & \ddots & \vdots \\
        0 & 0 & \cdots & a_n & 0 & 0 & \cdots & b_m
        \end{bmatrix}
        \right|, \\[-1em]
        &\quad \underbrace{\qquad\qquad\quad\quad}_{m} \quad \underbrace{\qquad\qquad\quad\quad}_{n}
    \end{align}
    which is a polynomial in $Y$ with real coefficients.
\end{definition}

The relationship between the resultant and common roots of two polynomials is well known and can be summarized as the following lemma.
\begin{lemma}[Elimination property of resultants~\cite{Cox2005}] \label{lem:res1}
    Let $f(X, Y)$ and $g(X, Y)$ be polynomials in $X$ and $Y$ with real coefficients.
    Suppose that $f$ and $g$ have a common zero $(\hat{X}, \hat{Y}) \in \mathbb{R}^2$. 
    Then, $\mathrm{Res}(f, g, X)$ vanishes at $Y = \hat{Y}$. 
    In particular, if $f$ and $g$ do not depend on $Y$ and have a common zero, then $\mathrm{Res}(f, g, X) = 0$.
\end{lemma}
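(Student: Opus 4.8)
The plan is to reduce the statement to a single algebraic identity and then evaluate it at the common zero. Concretely, I would first establish that the resultant lies in the ideal generated by $f$ and $g$ inside $\mathbb{R}[X, Y]$; that is, that there exist polynomials $A(X, Y)$ and $B(X, Y)$ with
\begin{equation}
    A(X, Y) f(X, Y) + B(X, Y) g(X, Y) = \mathrm{Res}(f, g, X), \label{eq:resident}
\end{equation}
where the right-hand side is a polynomial in $Y$ alone. Once \eqref{eq:resident} is available, the lemma is immediate: substituting the common zero $(\hat{X}, \hat{Y})$ makes both products on the left vanish, so that $\mathrm{Res}(f, g, X)$ evaluated at $Y = \hat{Y}$ equals zero. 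The ``in particular'' clause is then just the special case in which $Y$ is absent, so that $\mathrm{Res}(f, g, X)$ is a real constant and must itself equal zero.

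The core of the argument is therefore deriving \eqref{eq:resident} from the structure of the Sylvester matrix $S$. The key observation is that, with the monomial row vector $\bm{x} = (1, X, \dots, X^{n+m-1})$, the columns of $S$ encode the shifted polynomials, so that $S^\top \bm{x}^\top$ equals the column vector $\bm{w}$ whose entries are $f, Xf, \dots, X^{m-1} f, g, Xg, \dots, X^{n-1} g$. Reading this as the linear system $S^\top \bm{z} = \bm{w}$ with explicit solution $\bm{z} = \bm{x}^\top$, the adjugate (Cramer) identity gives $\det S^\top = \det \tilde{S}$, where $\tilde{S}$ is obtained from $S^\top$ by replacing its first column with $\bm{w}$; this is a polynomial identity that needs no invertibility of $S$. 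Expanding $\det \tilde{S}$ along that first column, every term is one of the entries $X^i f$ or $X^j g$ multiplied by a cofactor depending only on the coefficients $a_i, b_j$, hence only on $Y$. Collecting the $f$-terms and the $g$-terms yields exactly the polynomials $A$ and $B$ of \eqref{eq:resident}, while $\det S^\top = \det S = \mathrm{Res}(f, g, X)$.

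I expect the main obstacle to be bookkeeping rather than anything conceptual: one must confirm that the column and degree conventions of the Sylvester matrix in the stated definition are compatible with the identification $S^\top \bm{x}^\top = \bm{w}$, and in particular that the formal degrees $n$ and $m$ used to build $S$ are respected throughout the expansion. A subtlety worth flagging is that \eqref{eq:resident} is a genuine polynomial identity in $\mathbb{R}[X, Y]$, so its validity is unaffected by whether the leading coefficients $a_n(\hat{Y})$ or $b_m(\hat{Y})$ happen to vanish at the common zero; the evaluation step consequently requires no nondegeneracy hypothesis. This robustness is precisely what makes the lemma usable in the proof of Lemma~\ref{thm:5probes}, where the resultant is employed to eliminate a variable without any control over leading coefficients.
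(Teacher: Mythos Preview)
Your argument is correct and is essentially the standard textbook derivation of the identity $\mathrm{Res}(f,g,X) = A\,f + B\,g$ via cofactor expansion of the Sylvester matrix, followed by evaluation at the common zero. Note, however, that the paper does not supply its own proof of this lemma: it is stated as a quoted result from Cox, Little, and O'Shea~\cite{Cox2005} and used as a black box in the proof of Lemma~\ref{thm:5probes}. Your write-up therefore goes beyond what the paper does, and the approach you outline is precisely the one found in that reference.
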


From~\eqref{eq:tbz_sad}, we have 
\begin{align}
        4l_{ad} - \bm{q}_d^\top G_2^{-1} \bm{q}_d = 0 \label{eq:fd}\\
        4l_{ae} - \bm{q}_e^\top G_2^{-1} \bm{q}_e = 0. \label{eq:fe}
\end{align}
On the other hand, from\eqref{eq:Vbc_q} and~\eqref{eq:s34}, we have
\begin{align}
    l_{ad} + l_{ae} - l_{de} &= \bm{v}_{ad}^\top \left( V_{a,bc} V_{a,bc}^{-1} \right)V_{a,bc}^{-\top}\bm{q}_e \\
    &= (V_{a,bc}^\top \bm{v}_{ad})^\top \{V_{a,bc}^\top V_{a,bc}\}^{-1}\bm{q}_e \\
    &= \frac{1}{2}\bm{q}_d G_2^{-1} \bm{q}_e. \label{eq:pre_fde}
\end{align}
From~\eqref{eq:fd} and~\eqref{eq:fe}, both $l_{ad}$ and $l_{ae}$ are expressed as quadratic forms of $\bm{q}_d$ and $\bm{q}_e$, respectively. 
Substituting the quadratic forms into~\eqref{eq:pre_fde}, we obtain 
\begin{equation}
   4 l_{de} - (\bm{q}_d - \bm{q}_e)^\top G_2^{-1} (\bm{q}_d - \bm{q}_e) = 0. \label{eq:fde}
\end{equation}
By combining~\eqref{eq:R3} for quadroplets $(b, d, c, e)$, $(d, a, b, c)$, and $(e, a, c, b)$, we obtain
\begin{equation}
    l_{de} = \frac{D^{bd}_{ce}(\rho_{\mathrm{s}})D^{da}_{bc}(\rho_{\mathrm{s}})D^{ea}_{cb}(\rho_{\mathrm{s}})l_{bc}}{l_{ab}l_{ac}}l_{ad}l_{ae}. 
\end{equation}
Consequently, we obtain the following equations. 
\begin{align}
    &f_{d, \rho_{\mathrm{s}}}(l_{ad}) \coloneqq 4l_{ad} - \left\| \bm{\alpha}_d(\rho_{\mathrm{s}})l_{ad} + \bm{\beta} \right\|^2_{G_2^{-1}} = 0, \label{eq:fd2} \\
    &f_{e, \rho_{\mathrm{s}}}(l_{ae}) \coloneqq 4l_{ae} - \left\| \bm{\alpha}_e(\rho_{\mathrm{s}})l_{ae} + \bm{\beta} \right\|^2_{G_2^{-1}} = 0, \label{eq:fe2} \\
    &\begin{multlined}
        f_{de, \rho_{\mathrm{s}}}(l_{ad}, l_{ae}) \coloneqq 
        \left\| \bm{\alpha}_d(\rho_{\mathrm{s}})l_{ad} - \bm{\alpha}_e(\rho_{\mathrm{s}})l_{ae} \right\|^2_{G_2^{-1}} \\
        - 4\frac{D^{bd}_{ce}(\rho_{\mathrm{s}})D^{da}_{bc}(\rho_{\mathrm{s}})D^{ea}_{cb}(\rho_{\mathrm{s}})l_{bc}}{l_{ab}l_{ac}}l_{ad}l_{ae} = 0, \label{eq:fde2}
    \end{multlined} 
\end{align}

From Lemma~\ref{lem:res1}, if \eqref{eq:fd2} and~\eqref{eq:fde2} have a common root, the resultant $\mathrm{Res}(f_d, f_{de}, l_{ad})$, which is a polynomial in $l_{ae}$, vanishes at the common root. 
Moreover, Lemma~\ref{lem:res1} ensures that if~\eqref{eq:fe2} and $\mathrm{Res}(f_d, f_{de}, l_{ad})$ have a common root, the resultant $\mathrm{Res}(f_e, \mathrm{Res}(f_d, f_{de}, l_{ad}), l_{ae})$, which is a polynomial equation in $D^{da}_{bc}$, $D^{da}_{cb}$, $D^{ea}_{bc}$, $D^{ea}_{cb}$, and $D^{bd}_{ce}$ depending on a single parameter $\rho_{\mathrm{s}}$, must vanish at the common root. 
Hence, the true $\rho_{\mathrm{s}}$ necessarily satisfies the nonlinear equation 
\begin{equation}
    \mathrm{Res}(f_e, \mathrm{Res}(f_d, f_{de}, l_{ad}), l_{ae}) = 0. \label{eq:resres}
\end{equation}
In other words, all solutions of~\eqref{eq:resres} can be regarded as candidates for $\rho_{\mathrm{s}}$.
Therefore, we will show that the number of solutions to~\eqref{eq:resres} is at most finite, which implies that only a finite number of triplets $(l_{ad}, l_{ae}, \rho_{\mathrm{s}})$ satisfy~\eqref{eq:fd2}--\eqref{eq:fde2}.

By defining $\sigma_{\mathrm{s}} \coloneqq 1/\rho_{\mathrm{s}}$ and $\gamma^{ab}_{cd} \coloneqq 4 \pi R^{ab}_{cd}$, $D^{ab}_{cd}$ can be expressed as $D^{ab}_{cd} = \exp(\gamma^{ab}_{cd} \sigma_{\mathrm{s}})$.
For simplicity, $D^{da}_{bc}$, $D^{da}_{cb}$, $D^{ea}_{bc}$, $D^{ea}_{cb}$, and $D^{bd}_{ce}$ are denoted by $Z_1$, $Z_2$, $Z_3$, $Z_4$, and $Z_5$, respectively, and are expressed as
\begin{equation}
    Z_i = \exp(\gamma_i \sigma_{\mathrm{s}}), \quad i = 1, \dots, 5. \label{eq:Z}
\end{equation}
The resultant can then be computed from~\eqref{eq:fd2}--\eqref{eq:fde2} as the following polynomial of the total degree 20 in $Z_i\ (i=1,\dots,5)$:
\begin{align}
    &\mathrm{Res}(f_e, \mathrm{Res}(f_d, f_{de}, l_{ad}), l_{ae}) \\
    &= \frac{256l_{bc}^{16}}{|G_2|^8 l_{ac}^4} Z_1^8 Z_3^4 Z_4^4 Z_5^4 + \cdots \\
    &= a_{[8\ 0\ 4\ 4\ 4]^\top} Z_1^8Z_2^0Z_3^4Z_4^4Z_5^4 + \cdots \\
    &= \sum_{k \in K} a_k Z^k, 
\end{align}
where $k = [k_1\ \cdots\ k_5]^\top \in \bm{N}^5$ is a multi-index that specifies a monomial $Z^k \coloneqq Z_1^{k_1}Z_2^{k_2} \cdots Z_5^{k_5}$, $a_k$ is the coefficient of the monomial $Z^k$, $K \coloneqq \{k \in \bm{N}^5 \mid a_k \neq 0\}$, and $\gamma \coloneqq [\gamma_1\ \cdots\ \gamma_5]^\top$.
By substituting~\eqref{eq:Z} into the left-hand side of~\eqref{eq:resres}, we obtain
\begin{equation}
    g(\sigma_{\mathrm{s}}) \coloneqq \sum_{k \in K} a_k e^{k^\top \gamma \sigma_{\mathrm{s}}}. 
\end{equation}
We will show that $g(\sigma_{\mathrm{s}})$ has only a finite number of zeros, which completes the proof of Lemma~\ref{thm:5probes}.

First, we show that there exists $\hat{\sigma}_s > 0$ such that $g(\sigma_{\mathrm{s}}) \neq 0$ for all $\sigma_{\mathrm{s}} > \hat{\sigma}_s$. 
For $k_{\mathrm{max}} \coloneqq \argmax_{k \in K}(k^\top \gamma)$, we obtain
\begin{equation}
    g(\sigma_{\mathrm{s}})e^{-k_{\mathrm{max}}^\top \gamma \sigma_{\mathrm{s}}} = 
    a_{k_{\mathrm{max}}} + \sum_{k \in K \setminus \{k_{\mathrm{max}}\}} a_k e^{(k - k_{\mathrm{max}})^\top \gamma \sigma_{\mathrm{s}}}.
\end{equation}  
This indicates that $g(\sigma_{\mathrm{s}})\exp(-k_{\mathrm{max}}^\top \gamma \sigma_{\mathrm{s}}) \to a_{k_{\mathrm{max}}}$ as $\sigma_{\mathrm{s}} \to \infty$. 
More specifically, for any positive real number $\varepsilon > 0$, there exists $\hat{\sigma}_s$ such that for all $\sigma_{\mathrm{s}} > \hat{\sigma}_s$, 
$|a_{k_{\mathrm{max}}} - g(\sigma_{\mathrm{s}})\exp(-k_{\mathrm{max}}^\top \gamma \sigma_{\mathrm{s}})| < \varepsilon$. 
Therefore, by choosing $\varepsilon$ as $|a_{k_{\mathrm{max}}}|/2$, we obtain 
\begin{align}
    \frac{|a_{k_{\mathrm{max}}}|}{2} &> \left| a_{k_{\mathrm{max}}} - g(\sigma_{\mathrm{s}})\exp(-k_{\mathrm{max}}^\top \gamma \sigma_{\mathrm{s}}) \right| \\
    &\geq \left| a_{k_{\mathrm{max}}} \right| - \frac{\left| g(\sigma_{\mathrm{s}}) \right|}{\exp(k_{\mathrm{max}}^\top \gamma \sigma_{\mathrm{s}})} 
\end{align}
which indicates 
\begin{equation}
    |g(\sigma_{\mathrm{s}})| > \frac{|a_{k_{\mathrm{max}}}|}{2} \exp(k_{\mathrm{max}}^\top \gamma \sigma_{\mathrm{s}}) > 0 
\end{equation}
for all $\sigma_{\mathrm{s}} > \hat{\sigma}_s$.

Now, all zeros of $g(\sigma_{\mathrm{s}})$ must be contained in $\sigma_{\mathrm{s}} \coloneqq (0, \hat{\sigma}_s)$.
If $g(\sigma_{\mathrm{s}})$ has infinitely many zeros in $\sigma_{\mathrm{s}}$, the Bolzano--Weierstrass theorem~\cite{Ahlfors1953} implies that the zero set of $g(\sigma_{\mathrm{s}})$ contains an accumulation point in the compact set $\overline{\Sigma}_s$.
By the identity theorem of analytic functions~\cite{Ahlfors1953}, it implies that $g(\sigma_{\mathrm{s}}) = 0\ (\sigma_{\mathrm{s}} \in \bm{C})$, which contradicts $a_k \neq 0\ (k \in K)$.
Therefore, $g(\sigma_{\mathrm{s}})$ has only a finite number of zeros in $\sigma_{\mathrm{s}}$, which completes the proof of Lemma~\ref{thm:5probes}.

\end{document}